%% file: TSP_v01.tex
\documentclass[conference]{IEEEtran}

\newif\ifFullVersion
\usepackage{cite}
\usepackage{amsmath,amssymb,amsfonts}
\usepackage{algorithmic}
\usepackage{graphicx}
\usepackage{textcomp}
\usepackage{xcolor}
\def\BibTeX{{\rm B\kern-.05em{\sc i\kern-.025em b}\kern-.08em
    T\kern-.1667em\lower.7ex\hbox{E}\kern-.125emX}}

\usepackage{acronym}
\usepackage[belowskip=-10pt,aboveskip=1.8pt]{caption}
\usepackage{standalone}
\usepackage{subcaption}
\usepackage{tikz}
\usepackage{url,enumitem, cite}
\usepackage{verbatim}
\usepackage[bookmarks,colorlinks]{hyperref}
\usepackage{soul, xcolor}
\usepackage{mathrsfs}
\usepackage[ruled,linesnumbered,vlined]{algorithm2e}
\usepackage{dsfont}
\SetKwInput{KwData}{\textbf{Init}} 

\usepackage{tikz}
\usepackage{pgfplots}
\usepackage{xcolor}
\pgfplotsset{compat=1.10}
\usepgfplotslibrary{groupplots}
\usetikzlibrary{calc}
\usepackage{float}
\usepackage{placeins}

\usepackage[all=normal,paragraphs=tight,floats=normal,mathspacing=normal,wordspacing=tight,charwidths=tight,mathdisplays=normal,leading=normal]{savetrees}
\usepackage{pgfplots}
\usepackage{pgfplotstable}
\usetikzlibrary{spy}
\pgfplotsset{compat=newest}
\usepackage{booktabs}
\usepackage{csvsimple}

\usepackage{amsmath,amssymb,mathtools}
\usepackage{empheq} 

\usepackage{booktabs,tabularx}
\newcommand{\myVec}[1]{{\boldsymbol{#1}}}
\newcommand{\myMat}[1]{{\boldsymbol{#1}}}
\newcommand{\mySet}[1]{\mathcal{#1}}

\newcommand{\figref}[1]{Fig.~\ref{#1}}

\newcommand{\sbrackets}[1]{\left[#1\right]}
\newcommand{\expecteds}[1]{\mathds{E}\sbrackets{#1}}
\let\oldnl\nl
\newcommand{\nonl}{\renewcommand{\nl}{\let\nl\oldnl}}

\acrodef{lbg}[LBG]{Linde–Buzo–Gray}

\newtheorem{theorem}{Theorem}

\newtheorem{corollary}{Corollary}

\newcommand{\SigH}{\boldsymbol\Sigma^{\rm (H)}}
\newcommand{\SigT}{\boldsymbol\Sigma^{\rm (T)}}

\newcommand{\TSigH}{\tilde{\boldsymbol\Sigma}^{\rm (H)}}
\newcommand{\TSigT}{\tilde{\boldsymbol\Sigma}^{\rm (T)}}

\acrodef{ai}[AI]{artificial intelligence}
\acrodef{dl}[DL]{deep learning}
\acrodef{bs}[BS]{base station}
\acrodef{dnn}[DNN]{deep neural network}
\acrodef{cnn}[CNN]{convolutional neural network}
\acrodef{mlp}[MLP]{multi-layer perceptron}
\acrodef{snr}[SNR]{signal-to-noise ratio}
\acrodef{awgn}[AWGN]{additive white Gaussian noise} 
\acrodef{ml}[ML]{machine learning} 
\acrodef{sgd}[SGD]{stochastic gradient descent} 
\acrodef{mse}[MSE]{mean-squared error}
\acrodef{rmse}[RMSE]{root mean squared error}
\acrodef{rmspe}[RMSPE]{root mean squared periodic error}
\acrodef{mle}[MLE]{maximum likelihood estimation}
\acrodef{snr}[SNR]{signal-to-noise ratio}
\acrodef{admm}[ADMM]{alternating direction method of multipliers}
\acrodef{aoa}[AoA]{Angle of Arrival}
\acrodefplural{aoa}[AoAs]{Angles of Arrival}
\acrodef{em}[EM]{electromagnetic}
\acrodef{cmos}[CMOS]{complementary metal-oxide semiconductor}
\acrodef{ula}[ULA]{uniform linear array}
\acrodef{em}[EM]{Electromagnetic}
\acrodef{doa}[DoA]{direction of arrival}
\acrodef{music}[MUSIC]{MUltiple SIgnal Classification}
\acrodef{esprit}[ESPRIT]{Estimation of signal parameters via rotational invariance techniques}
\acrodef{evd}[EVD]{eigenvalues decomposition}
\acrodef{sps}[SPS]{spatial smoothing}
\acrodef{iid} [i.i.d] {Independent and identically distributed}
\acrodef{ls}[LS]{Least Square}
\acrodef{relu}[ReLu]{Rectified Linear Unit}
\acrodef{crb} [CRB] {Cramér–Rao Bound}
\acrodef{ccrb} [CCRB] {Conditional Cramér–Rao Bound} 
\acrodef{vqvae}[VQ-VAE]{vector quantized variational autoencoder}
\acrodef{rm}[Root-MUSIC]{Root-MUSIC}
\acrodef{esprit}[ESPRIT]{Estimation of Signal Parameters via Rotational Invariance Techniques}
\acrodef{drm}[DR-MUSIC]{Deep Root-MUSIC}
\acrodef{ssn}[SubspaceNet]{Subspace Net}
\acrodef{rssn}[Remote SubspaceNet]{Remote Subspace Net}
\acrodef{mbdl}[Model Based Deep Learning]{Model Based Deep Learning}
\acrodef{anees}[ANEES]{average normalized estimation error squared}

\setstcolor{blue}

\input{plotting_latex_code/benchmark_plot_common.tex}

\input{plotting_latex_code/angle_sweep_setup_snr0.tex}

\IEEEoverridecommandlockouts
\begin{document}

\title{AI-Aided ESPRIT for Joint DoA\\ Estimation and Uncertainty Extraction
}

\author{
	\IEEEauthorblockN{Raz Zohar and Nir Shlezinger
\thanks{ 
The preliminary findings of this research were presented in the IEEE Sensor Array and Multichannel Signal Processing Workshop (SAM 2026) as the paper \cite{zohar2026deep}.  
This work was support by the Israel Science Foundation (ISF) under grant no. 3314/25, and by the Israeli Ministry of Science and Technology. 
R. Zohar and N. Shlezinger are with the ECE School, Ben-Gurion University of the Negev, Israel (e-mails: razzoh@post.bgu.ac.il;
nirshl@bgu.ac.il).  
}}
}

\maketitle


\begin{abstract} 
\Ac{doa} estimation often requires not only accurate recovery of source directions, but also reliable characterization of the uncertainty in these estimates. While classical subspace methods such as \ac{esprit} provide principled uncertainty analyses, their performance and uncertainty quantification rely on restrictive assumptions. Recent deep learning approaches alleviate these limitations and enable robust \ac{doa} estimation in challenging  conditions, but generally provide point estimates and lack principled uncertainty characterization. In this work, we develop an \ac{ai}-aided framework for joint \ac{doa} estimation and uncertainty quantification that combines the robustness of model-based deep learning with the analytical foundations of classical subspace methods. Building on \ac{ai} surrogate covariance recovery, we extend existing \ac{esprit} uncertainty analyses to characterize the full covariance structure of the \ac{doa} estimation error and integrate this characterization into a subspace-oriented deep learning architecture. We further propose a dedicated learning strategy that jointly promotes accurate \ac{doa} recovery and faithful uncertainty estimation. The resulting methodology preserves the interpretable processing pipeline of classical subspace methods while enabling reliable operation in regimes where conventional approaches struggle. Our numerical studies demonstrate that the proposed framework consistently achieves accurate \ac{doa} estimation together with reliable uncertainty characterization across diverse challenging scenarios, including coherent sources, limited snapshots, and array calibration errors.
\end{abstract}

\acresetall

\section{Introduction}
\label{sec:intro} 
\Ac{doa} estimation is a key task in array signal processing, playing a central role in radar, sonar, wireless communications, acoustics, radio astronomy, and localization systems~\cite{pillai2012array}. By exploiting the spatial diversity induced by sensor arrays, \ac{doa} estimation algorithms infer the angular location of signal emitters from noisy observations collected across multiple sensors~\cite{tuncer2009classical}. In various array processing applications, accurate angle recovery alone is insufficient, requiring not only point estimates of the source directions, but also a characterization of the confidence associated with these estimates~\cite{haykin1992some}. Such uncertainty information is critical for downstream tasks including target tracking~\cite{konstantino2026unsupervised}, sensor fusion~\cite{hawkes2003wideband}, channel estimation~\cite{yang2020bayesian}, and signal enhancement~\cite{lam2006bayesian},  where the reliability of the estimated parameters can be as important as the estimates themselves.

 Classical beamforming approaches~\cite{capon1969mvdrbf} estimate source directions by scanning the angular domain and identifying peaks in a spatial spectrum, with an angular resolution that is limited by the array aperture and observation conditions~\cite{benesty2017fundamentals}. To overcome these limitations, subspace-based methods exploit the \ac{evd} of the array covariance matrix to separate the signal and noise subspaces. Representative examples include \ac{music}~\cite{schmidt1986music} and Root \ac{music}~\cite{Barabell1983ImprovingTR}, which identifies directions through subspace orthogonality, as well as  \ac{esprit}~\cite{roy1989esprit}, which leverages the rotational invariance structure of the array manifold to  recover the source angles. Subspace methods constitute a leading family of \ac{doa} estimation methods due to their high resolution, strong theoretical foundations, and interpretable processing pipeline~\cite{liu2023twenty}.

 Beyond angle recovery,  array processing theory provides principled tools for uncertainty characterization. These include the characterization of fundamental performance limits, such as the \ac{crb}~\cite{stoica1989music,liang2020review}, that are invariant of the specific estimator employed, as well as algorithm-specific uncertainty quantification based on  perturbation analyses of subspace-based estimators such as Root-\ac{music}~\cite{rao2002performance} and \ac{esprit}~\cite{yuen2002asymptotic}. The latter methods enable the prediction of estimation variance  for a given estimate  from the observed data. However, subspace methods and their corresponding uncertainty analyses are derived under restrictive assumptions that are often violated in practice, including sufficiently many independent snapshots, accurate array calibration,  and non-coherent source emissions. Furthermore, uncertainty extraction is often more sensitive to deviations than \ac{doa} estimation itself, as reliable variance prediction requires not only accurate angle recovery but also a faithful characterization of the estimator statistics. Consequently, in challenging operating regimes involving coherent sources, array imperfections, low \acp{snr}, or limited observation intervals, both the estimation performance and the validity of the associated uncertainty measures can deteriorate substantially.

The emergence of deep learning has led to the development of \ac{dnn}-based \ac{doa} estimation methods capable of operating in challenging scenarios where classical array processing algorithms struggle~\cite{al2022review}. Black-box architectures based on \acp{mlp}~\cite{DNN_WITH_Antenna_ARRAY,cong2020robust,feintuch2023neural}, \acp{cnn}~\cite{DOAEstimation_LowSNR,lee2022deep,zheng2024deepdoa}, and attention mechanisms~\cite{lan2023novel,ji2024transmusic} have been proposed to directly map array measurements or covariance-derived features into source directions. An alternative paradigm follows model-based deep learning principles~\cite{shlezinger2023model}, integrating trainable neural components into established array processing pipelines rather than replacing them entirely. Representative examples include deep-learning-aided \ac{music} algorithms~\cite{lee2022ftmr,elbir2020deepmusic}, covariance reconstruction networks~\cite{barthelme2021doa,wu2022gridless,jiang2023toeplitz,shiran2026deep}, and surrogate covariance learning approaches such as SubspaceNet~\cite{shmuel2023subspacenet} and its variants~\cite{DA-MUSIC-2023,xu2024md,gast2025near,zohar2025remote}. By leveraging data-driven processing, these methods can  enable reliable \ac{doa} estimation in regimes where conventional subspace methods often fail. Nevertheless, while modern \acp{dnn} can enhance estimation accuracy, they generally provide only point estimates and do not naturally quantify the confidence associated with their predictions.

Uncertainty quantification in \ac{ai} has  attracted considerable attention in recent years~\cite{gawlikowski2023survey}. Existing approaches include Bayesian neural networks~\cite{jospin2022hands}, which model uncertainty through distributions over network parameters, and were considered for \ac{doa} localization in \cite{fu2026deep}; ensemble-based methods, that approximate predictive distributions using multiple independently trained models~\cite{rahaman2021uncertainty}; and conformal prediction techniques, which provide distribution-free confidence sets~\cite{lindemann2024formal}, and were used for calibrating \acp{dnn} trained for localization in  \cite{khurjekar2023uncertainty, rozenfeld2026uncertainty}. Being model-agnostic methods designed for black-box \acp{dnn}, these methodologies 
are agnostic of the physical structure underlying the estimation problem. More fundamentally, these methods treat uncertainty for black-box \acp{dnn}, rather than of the underlying array processing algorithm. As a result, they do not leverage the  analytical characterization available for classical estimators such as \ac{esprit}, and  their uncertainty measures are not directly tied to the signal model and the recovered spatial covariance. Consequently, despite the significant advances brought by deep learning to \ac{doa} estimation, principled uncertainty quantification that remains reliable under challenging operating conditions is still largely an open problem.

\subsection*{Contributions} 
In this work, we extend model-based deep learning architectures for \ac{doa} estimation, and in particular \ac{dnn}-aided subspace methods based on surrogate covariance recovery, to provide   principled uncertainty quantification in challenging operating conditions. Our approach builds on two complementary observations: $(i)$ \ac{dnn}-aided subspace methods preserve the core processing stages of classical estimators such as \ac{esprit}, while enabling reliable operation under coherent sources, array imperfections, and limited observations~\cite{shmuel2023subspacenet}; and $(ii)$ classical subspace methods are accompanied by rigorous analytical characterizations of their estimation uncertainty~\cite{yuen2002asymptotic,rao2002performance} . By extending these analytical tools and integrating them into a \ac{dnn}-aided subspace processing pipeline, we develop a framework for  jointly recovering \acp{doa} and their associated uncertainty. Our  methodology combines the robustness of modern data-driven learning with the interpretability and theoretical foundations of classical array processing, yielding reliable uncertainty estimates even in scenarios where conventional subspace methods and their corresponding uncertainty analyses break down.

Our main contributions are summarized as follows:
\begin{itemize}
\item {\bf \ac{esprit} Error Covariance Characterization}: We extend existing asymptotic uncertainty analyses of \ac{esprit}~\cite{yuen2002asymptotic} to characterize the {\em full covariance} of its \ac{doa} estimation error. The  formulation enables recovering uncertainty information beyond per-source variance estimates, providing a richer statistical characterization of the estimated directions.

\item {\bf Uncertainty Extraction with SubspaceNet}: We propose an uncertainty-aware \ac{dnn}-aided subspace estimation framework that combines learned surrogate covariance recovery~\cite{shmuel2023subspacenet} with \ac{esprit}-based \ac{doa} estimation and covariance reconstruction. By preserving the subspace processing chain, the proposed architecture retains the interpretability and analytical tractability of classical methods while extending their applicability to challenging  regimes.

\item {\bf Uncertainty-Aware Training}: We develop a dedicated learning methodology that  trains the surrogate covariance recovery module to support both accurate \ac{doa} estimation and reliable uncertainty quantification. This joint design allows the learned representation to both facilitate faithful recovery of both the \acp{doa} and their error covariance.

\item {\bf Extensive Experimentation}: We conduct an extensive numerical study across a broad range of operating conditions, including coherent sources, limited snapshots, low \acp{snr}, and array calibration errors. The results demonstrate that the proposed approach consistently achieves accurate \ac{doa} estimation together with reliable uncertainty characterization, substantially outperforming conventional subspace-based uncertainty quantification methods in challenging scenarios.
\end{itemize}

The rest of this paper is organized as follows: Section~\ref{sec:System Model and Preliminaries} presents the system model and reviews some preliminaries. 
We derive the uncertainty extraction method in Section~\ref{sec: Method} and evaluate it in Section~\ref{sec: Numerical Study}. Section~\ref{sec: Conclusions} provide concluding remarks.  

Throughout this paper, we use
boldface-uppercase for matrices, e.g., $\myVec{X}$, 
boldface-lowercase for vectors, e.g., $\myVec{x}$. 
We denote the $j$th entry of vector $\myVec{x}$ and the $(i,j)$th entry of matrix $\myMat{X}$ by $[\myVec{x}]_j$ and $[\myMat{X}]_{i,j}$, respectively, while ${\rm diag}(\myVec{x})$ is a diagonal matrix with $\myVec{x}$ on its main diagonal. 
We use $(\cdot)^{\mathsf{H}}$, $(\cdot)^{\mathsf{T}}$, $(\cdot)^{\dagger}$, $(\cdot)^{*}$, $\| \cdot\|$, $\Re\{\cdot\}$, $\angle(\cdot)$, and $ \expecteds{\cdot}$  for the hermitian transpose, transpose, Moore–Penrose pseudoinverse, conjugate,  $\ell_2$/Frobenius  norm (for vectors/matrices), real value, phase, and stochastic expectation, respectively, while $\mathbb{C}$ is the set of complex numbers.
%

\section{System Model and Preliminaries}
\label{sec:System Model and Preliminaries} 
In this section, we formulate the  problem of uncertainty-aware \ac{doa} estimation and review the necessary background. We begin in Subsection~\ref{subsec: DoA Estimation Signal Model} by introducing the array signal model. Next, Subsection~\ref{ssec:Problem} formally defines the problem of jointly estimating the source directions and their associated uncertainty. Then, Subsection~\ref{ssec:Preliminaries} reviews the classical and learning-based \ac{doa} estimation methodologies that form the basis of the proposed approach, with an emphasis on \ac{esprit} and SubspaceNet.

\subsection{Signal Model}
\label{subsec: DoA Estimation Signal Model} 
We consider the classical narrowband far-field array processing model, where a set of stationary emitters are observed by a sensor array over multiple temporal snapshots. Specifically, let a \ac{ula} consisting of $N$ sensors with half-wavelength inter-element spacing receive signals from $M$ static sources over $T$ observation intervals. 
Denote by $\myMat{X}\triangleq[\myVec{x}(1),\ldots,\myVec{x}(T)]\in\mathbb{C}^{N\times T}$ the matrix collecting the received array snapshots, where $\myVec{x}(t)\in\mathbb{C}^{N}$ represents the measurement vector acquired at time index $t$. Likewise, let $\myMat{S}\triangleq[\myVec{s}(1),\ldots,\myVec{s}(T)]\in\mathbb{C}^{M\times T}$ collect the transmitted source symbols, where $\myVec{s}(t)\in\mathbb{C}^{M}$ contains the symbols emitted by the $M$ sources at the $t$th snapshot. The unknown directions of arrival are gathered in the vector
$\myVec{\theta}=[\theta_1,\ldots,\theta_M]^{\mathsf T}$.

Under the narrowband propagation model, the received array measurements satisfy 
\begin{equation}
        \myMat{X} = \myMat{A}(\myVec{\theta})\myMat{S} + \myMat{V},
\label{eq:signals_observation_model} 
\end{equation}
where $\myMat{V}$ is the noise matrix with i.i.d. zero-mean entries of variance $\sigma_v^2$, and   $\myMat{A}(\myVec{\theta})\triangleq      \big[ \myVec{a}(\theta_1), \ldots, \myVec{a}(\theta_M) \big] \in \mathbb{C}^{N \times M}$ is the steering matrix whose columns correspond to the array responses associated with the different source directions.  For the considered half-wavelength \ac{ula}, the steering vector associated with an impinging signal arriving from angle $\theta$ is given by  
\begin{equation}
    \myVec{a}(\theta) \triangleq 
    \big[ 1, \, e^{-j\pi \sin(\theta)}, \, \ldots, \, e^{-j\pi (N-1) \sin(\theta)} \big]^{\mathsf{T}}.
    \label{eqn:SteeringVec} 
\end{equation}

\subsection{Problem Formulation}
\label{ssec:Problem}
We aim to design an estimator that maps the array measurements $\myMat{X}$ into both the source directions and a statistical characterization of their estimation error. Specifically, the estimator is required to map $\myMat{X}$ into $\big( \hat{\myVec{\theta}}, \hat{\myMat{\Sigma}} \big)$, where $\hat{\myVec{\theta}}\in\mathbb{R}^{M}$ is the estimated \ac{doa} vector, and $\hat{\myMat{\Sigma}}\in\mathbb{R}^{M\times M}$ is an estimate of the covariance matrix of the corresponding \ac{doa} estimation error. 
By defining
\begin{equation}
    \Delta \myVec{\theta} \triangleq \hat{\myVec{\theta}}-\myVec{\theta},
    \label{eqn:DeltaTheta}
\end{equation}
the accuracy of the \ac{doa} estimates is evaluated in terms of the \ac{mse} $\mathbb{E}\big[\big\|\Delta \myVec{\theta}  \big\|^{2} \big]$,  while $\hat{\myMat{\Sigma}}$ should approximate the error covariance (assuming unbiased estimates)
\begin{equation}
\myMat{\Sigma} \triangleq \mathbb{E}\left[ \Delta \myVec{\theta} \Delta \myVec{\theta} ^{\mathsf T} \right].
\label{eqn:EstCov}
\end{equation}
Unlike uncertainty representations based only on the marginal error variances, the matrix $\hat{\myMat{\Sigma}}$ should capture both the reliability of each individual \ac{doa} estimate through its diagonal entries and the statistical coupling between errors associated with different source directions through its off-diagonal entries.

We focus on operating conditions in which conventional subspace methods are known to be sensitive to model mismatch and finite-sample effects. In particular, the estimator should operate reliably under the following challenges:
\begin{enumerate}[label={\em C\arabic*}]
    \item \label{itm:Coherent} \emph{Coherent sources}, for which the  covariance matrix of the source signal $\myVec{s}(t)$ is rank-deficient.
    \item \label{itm:Miscalibration} \emph{Array miscalibration}, where the true array response deviates from the nominal steering model in \eqref{eqn:SteeringVec}.
    \item \label{itm:Snapshots} \emph{Limited observations}, corresponding to a small number of temporal snapshots ($T$), low \ac{snr} (large $\sigma_v^2$), or both.
\end{enumerate}
These scenarios are especially challenging for uncertainty-aware \ac{doa} estimation, since the same deviations that degrade the accuracy of classical subspace methods (briefly recalled in the following subsection) also invalidate the statistical assumptions underlying their uncertainty analyses.

To cope with these challenges, we adopt a data-aided design approach. During training, we assume access to a labeled dataset containing array measurements and their corresponding ground-truth \acp{doa}, given by 
\begin{equation}
    \label{eqn:DataSet}
    \mySet{D} = \big\{ \myMat{X}^{(i)}, \myVec{\theta}^{(i)} \big\}_{i=1}^{|\mySet{D}|}. 
\end{equation}
The learned estimator is then deployed to infer both $\hat{\myVec{\theta}}$ and $\hat{\myMat{\Sigma}}$ from previously unseen array measurements. Our goal is therefore to exploit the training data to enhance robustness to the challenges \ref{itm:Coherent}-\ref{itm:Snapshots}, while retaining the interpretable uncertainty characterization enabled by subspace-based \ac{doa} recovery.
 
\subsection{Preliminaries}
\label{ssec:Preliminaries} 

\subsubsection{Subspace Methods}
A leading class of \ac{doa} estimators is based on the eigenspace structure of the array covariance matrix. Let $\myMat{R}_S\triangleq \mathbb{E}[\myVec{s}(t)\myVec{s}^{\mathsf H}(t)]$ denote the source covariance matrix. It follows from \eqref{eq:signals_observation_model} that the covariance of the received array snapshots is given by   
\begin{equation}
    \myMat{R}_X = \mathbb{E}\!\left[\myVec{x}(t)\myVec{x}^\mathsf{H}(t)\right]
    = \myMat{A}(\myVec{\theta}) \,\myMat{R}_S\, \myMat{A}^\mathsf{H}(\myVec{\theta}) + \sigma_v^2 \myMat{I}_{N}.
    \label{eq:covariance_matrix}
\end{equation}
When $\myMat{R}_S$ is full rank, the eigenspace of $\myMat{R}_X$ can be decomposed into a signal subspace $\myMat{E}_{\rm S}$, spanned by the steering vectors, and a noise subspace $\myMat{E}_{\rm N}$, which is orthogonal to them. This orthogonality implies that $
    \| \myMat{E}_{\rm N}^\mathsf{H}\,\myVec{a}({\theta}_i) \|^2 = 0$, 
and forms the basis of subspace-based \ac{doa} estimation~\cite{pillai2012array}. In practice, the covariance matrix is replaced by its empirical estimate
$\hat{\myMat{R}}_X=\frac{1}{T}\myMat{X}\myMat{X}^{\mathsf H}$, and the \acp{doa} are inferred using the nominal steering model in \eqref{eqn:SteeringVec}. Consequently, subspace methods are sensitive to \ref{itm:Coherent}-\ref{itm:Snapshots}: coherent sources violate the full-rank assumption on $\myMat{R}_S$, limited snapshots degrade the accuracy of $\hat{\myMat{R}}_X$, and array miscalibration induces mismatch in the steering model.
     
A representative subspace method is {\em \ac{esprit}}~\cite{roy1989esprit}, which exploits the rotational invariance of the signal subspace. For a \ac{ula}, the array can be viewed as two subarrays whose steering matrices, denoted by $\myMat{A}_1(\myVec{\theta})$ and $\myMat{A}_2(\myVec{\theta})$, satisfy  
$\mathbf{A}_2(\boldsymbol{\theta}) = \mathbf{A}_1(\boldsymbol{\theta}) \myMat{\Lambda}_{\myVec{\theta}}$,
where 
\begin{equation}
\label{eqn:lambdaDef}
\myMat{\Lambda}_{\myVec{\theta}} \triangleq \mathrm{diag}\!\left(e^{-j\pi\sin(\theta_1)}, \ldots, e^{-j\pi\sin(\theta_M)}\right). 
\end{equation}
Let $\myMat{J}_1,\myMat{J}_2\in\{0,1\}^{(N-k)\times N}$ be the  subarray selection matrices, where $k$ is the displacement between the subarrays. The selection should hold the rotational invariance property, namely $\myMat{J}_2 \myVec{a}(\theta)\equiv e^{-j \pi k \sin(\theta)}\myMat{J}_1 \myVec{a}(\theta)$, most commonly obtained by setting $\myMat{J}_1 = [\myMat{I}_{N-k},  \myMat{0}]$ and $\myMat{J}_2 = [\myMat{0}, \myMat{I}_{N-k}]$.
The selected signal subspaces are then
$\myMat{E}_x=\myMat{J}_1\myMat{E}_{\rm S}$ and
$\myMat{E}_y=\myMat{J}_2\myMat{E}_{\rm S}$.

The rotational invariance property implies that
$\myMat{E}_y = \myMat{E}_x \,\myMat{F}$,
for a nonsingular transformation matrix $\myMat{F} \in \mathbb{C}^{M\times M}$,  obtained as
\begin{align}
    \myMat{F}     =     \myMat{E}_{x}^{\dagger}     \myMat{E}_{y}     =     \left(     \myMat{J}_{1}\myMat{E}_{\rm S}    \right)^{\dagger}     \myMat{J}_{2}\myMat{E}_{\rm S}.     \label{eqn:esprit_psi_matrix}
\end{align}  
The eigenvalues of $\myMat{F}$, denoted $\myVec{\lambda}= [\lambda_1,\ldots,\lambda_M]^{\mathsf{T}}$, are the diagonal entries of $\myMat{\Lambda}_{\myVec{\theta}}$, and the \acp{doa} are recovered as 
\begin{equation}
\label{eqn:ESPRIT}
\hat{\myMat{\theta}} 
= \mathrm{ESPRIT}\left(\hat{\myMat{R}}_X\right) 
= \sin^{-1} \left(- {\angle\,\myVec{\lambda}}/{\pi} \right). 
\end{equation} 

\subsubsection{\ac{dnn}-Based \ac{doa} Recovery}
Deep learning methods have been shown to improve \ac{doa} estimation in regimes where conventional subspace methods are sensitive to \ref{itm:Coherent}-\ref{itm:Snapshots}. A direct approach is to employ an {\em end-to-end} \ac{dnn} that maps the received measurements, or features extracted from them, into the estimated source directions. For instance, the input to the network can be the snapshot matrix $\myMat{X}$ itself or a covariance-based feature such as $\frac{1}{T}\myMat{X}\myMat{X}^{\mathsf H}$~\cite{DNN_WITH_Antenna_ARRAY,DOAEstimation_LowSNR}. Letting $\myVec{\psi}$ denote the trainable parameters and ${\rm DNN}(\cdot;\myVec{\psi})$ be the induced mapping, such estimators take the form
\begin{equation}
\label{eqn:E2E}
    \hat{\myVec{\theta}}^{\rm E2E} = {\rm DNN}(\myMat{X}; \myVec{\psi}).
\end{equation}

An alternative approach follows model-based deep learning, where trainable components are incorporated into classical signal processing pipelines rather than replacing them entirely~\cite{shlezinger2020model}. In particular, {\em SubspaceNet}~\cite{shmuel2023subspacenet} learns a surrogate covariance matrix $\tilde{\myMat{R}}_X$ from the measurements $\myMat{X}$, while retaining the downstream subspace-based \ac{doa} estimator. This can be viewed as a discriminative architecture whose trainable component is optimized through the final \ac{doa} recovery objective~\cite{shlezinger2022discriminative}. When \ac{esprit} is used as the subspace estimator, the resulting estimate is 
\begin{equation}
\label{eqn:SSN}
       \hat{\myVec{\theta}}^{\rm SSN} = {\rm ESPRIT}\left(\tilde{\myMat{R}}_X={\rm DNN}(\myMat{X}; \myVec{\psi})\right).
\end{equation}
Thus, the learned covariance is trained to support the subsequent subspace operation, enabling \ac{esprit}-based \ac{doa} recovery under \ref{itm:Coherent}-\ref{itm:Snapshots} while preserving the interpretable processing structure of classical subspace methods. Still, only point-wise estimates of $\myVec{\theta}$ are provided, without providing uncertainty measures as in \eqref{eqn:EstCov}.


\section{AI-Aided Joint DoA and  Covariance Recovery}
\label{sec: Method} 
We now present the proposed framework for jointly estimating the \acp{doa} and characterizing their error covariance. The key idea is to leverage the complementary advantages of model-based deep learning and classical subspace analysis: \ac{dnn}-aided surrogate covariance recovery, as in SubspaceNet, enables subspace-based \ac{doa} estimation in challenging regimes under \ref{itm:Coherent}-\ref{itm:Snapshots}. While this is supported by both black-box \eqref{eqn:E2E} and model-based learning \eqref{eqn:SSN}, they substantially differ in their ability to provide uncertainty. By retaining the \ac{esprit} processing  as in \eqref{eqn:SSN}, the resulting estimator preserves the analytical mechanisms that allow uncertainty characterization through perturbation analysis. This combination allows us characterize uncertainty under \ref{itm:Coherent}-\ref{itm:Snapshots}, while providing not only accurate point estimates but also interpretable error covariance estimates.

Our design is motivated by the observation that the same subspace quantities used by \ac{esprit} to recover the source directions also determine the sensitivity of the recovered angles to perturbations in the covariance matrix. Classical asymptotic analyses of \ac{esprit}~\cite{yuen2002asymptotic} exploit this relation to characterize the variance of individual \ac{doa} estimates under idealized conditions. Here, we first extend this characterization to recover the full covariance matrix of the \ac{doa} estimation error in Subsection~\ref{subsec:esprit_uncertainty_characterization}, providing a full characterization that captures both the marginal uncertainty of each estimated direction and the correlations between the estimation errors of different sources. This full covariance characterization is then incorporated into the SubspaceNet pipeline in Subsection~\ref{ssec:Algorithm}, which  we adapt to have its   learned surrogate covariance optimized for both angle recovery and uncertainty estimation. We conclude with a discussion in Subsection~\ref{ssec:Discssion}.

\subsection{ESPRIT Uncertainty Characterization}
\label{subsec:esprit_uncertainty_characterization}
We begin by characterizing the uncertainty associated with \ac{esprit}-based \ac{doa} recovery. When operating under its conventional assumptions, namely calibrated arrays, non-coherent sources, and sufficiently many snapshots, the estimation error of \ac{esprit} can be analyzed using asymptotic perturbation tools~\cite{yuen2002asymptotic}. Such analyses relate the fluctuations of the empirical covariance matrix to the resulting perturbations in the signal subspace, the rotational invariance matrix, and eventually the recovered \acp{doa}. 

However, existing \ac{esprit} uncertainty characterizations are not directly suited to our setting. First, they are typically derived for specific \ac{esprit} variants based on two non-overlapping subarrays, obtained by partitioning the $N$ sensors into two disjoint subarrays of size $N/2$. Practical \ac{esprit} implementations often rely on overlapping subarrays, and in particular on maximally overlapping subarrays, due to their improved effective aperture and statistical efficiency. Second, existing characterizations usually focus on the  \ac{mse} of each recovered direction, and do not capture the full statistical coupling between the errors of different \ac{doa} estimates. This coupling can be important in downstream tasks such as tracking and sensor fusion,, where the joint distribution of the estimated parameters affects subsequent inference.
To address these limitations, we next extend the perturbation analysis of \ac{esprit} in~\cite{yuen2002asymptotic} to overlapping-subarray \ac{esprit}, and derive the complete covariance matrix of the \ac{doa} estimation error, rather than only its diagonal entries.

\subsubsection{Definitions}
To formulate our covariance characterization, we  consider \ac{esprit} applied to measurements taken from
a calibrated \ac{ula}.
The selection matrices $\myMat{J}_{1}, \myMat{J}_{2}$ can represent any subarray division for which rotational invariance holds.
The input covariance $\hat{\myMat{R}}_X$ processed by \ac{esprit} is an estimate of the true covariance \eqref{eq:covariance_matrix}, and we define the deviation as 
\begin{align}
    \Delta \myMat{R} \triangleq \hat{\myMat{R}}_X
    -
    \myMat{R}_X.
    \label{eqn:covariance_perturbation}
\end{align}


Next, for the matrix $\myMat{F}$ computed by \ac{esprit} in \eqref{eqn:esprit_psi_matrix}, we use  \(\myVec{v}_i\) and \(\myVec{q}_i\) to respectively denote its right and left eigenvectors  associated with the eigenvalue \(\lambda_i\). Furthermore, let $(\alpha_i, \myVec{\rho}_i)$ be the $i$th eigenvalue and eigenvector  of true covariance $\myMat{R}_X$ in \eqref{eq:covariance_matrix}. For each $g,h\in \{1,\ldots,M\}$, we define the $N \times N$ matrices
\begin{subequations}
    \label{eq:cov_moments}
\begin{align} 
\TSigH_{gh}
&\triangleq
\sum_{\substack{l=1\\ l\neq g}}^{N}
\sum_{\substack{n=1\\ n\neq h}}^{N}
\sum_{a_1,a_2=1}^{N}
\sum_{b_1,b_2=1}^{N}
 [\myVec{\rho}^{*}_{l}]_{a_1}\, [\myVec{\rho}_g]_{a_2}\, [\myVec{\rho}_n]_{b_1}\, [\myVec{\rho}^{*}_h]_{b_2}
\notag \\
&\quad\times
\frac{\mathbb{E}\!\left[[\Delta \myMat{R}]_{a_1 a_2}\,
                  [\Delta \myMat{R}]^{*}_{b_1 b_2}\right]}{(\alpha_g-\alpha_l)(\alpha_h-\alpha_n)}
\myVec{\rho}_l \myVec{\rho}_n^{H},
\label{eq:cov_moments_hermit_eq} 
\end{align}
and 
\begin{align} 
\TSigT_{gh}
&\triangleq
\sum_{\substack{l=1\\ l\neq g}}^{N}
\sum_{\substack{n=1\\ n\neq h}}^{N}
\sum_{a_1,a_2=1}^{N}
\sum_{b_1,b_2=1}^{N}
 [\myVec{\rho}^{*}_{l}]_{a_1}\, [\myVec{\rho}_g]_{a_2}\, [\myVec{\rho}_n]_{b_1}\, [\myVec{\rho}_h]_{b_2}
\notag \\
&\quad\times
\frac{\mathbb{E}\!\left[[\Delta \myMat{R}]_{a_1 a_2}\,
                  [\Delta \myMat{R}]_{b_1 b_2}\right]}{(\alpha_g-\alpha_l)(\alpha_h-\alpha_n)}
\myVec{\rho}_l \myVec{\rho}_n^{T}.
\label{eq:cov_moments_transpose_eq} 
\end{align}
\end{subequations}
We use \eqref{eq:cov_moments} to define for  each $i,j\in \{1,\ldots, M\}$, the matrices
\begin{align*}
    \SigH_{ij}
    \!\!\triangleq\!
    \sum_{g=1}^{M}
    \sum_{h=1}^{M}
    [\myVec{v}_i]_g
    [\myVec{v}_j]_{h}^{*}
    \TSigH_{gh},
    \,
    \SigT_{ij}
    \!\!\triangleq\!
    \sum_{g=1}^{M}
    \sum_{h=1}^{M}
    [\myVec{v}_i]_g
    [\myVec{v}_j]_{h}
    \TSigT_{gh}. 
\end{align*}

Next, by writing $\myMat{W}_i\triangleq \myMat{J}_{1} -
    \lambda_i^{*}\myMat{J}_{2}$ and $\myMat{U}_i\triangleq
    \myMat{J}_{2}     -     \lambda_i\myMat{J}_{1}$ for each $i\in \{1,\ldots,M\}$, we define the $M\times M$ matrices $\myMat{K}_{\lambda}$ and $\widetilde{\myMat{K}}_{\lambda}$ whose $(i,j)$th elements are obtained as
    \begin{subequations}
        \label{eq:Klambda_pairwise_full}    
\begin{align}
    \left[
    \myMat{K}_{\lambda}
    \right]_{i,j}
    &=
    \myVec{q}_i^{\mathsf{H}}
    \myMat{E}_x^{\dagger}
    \myMat{W}_i
    \SigH_{ij}
    \myMat{W}_j^{\mathsf{H}}
    \left(
    \myMat{E}_x^{\dagger}
    \right)^{\mathsf{H}}
    \myVec{q}_j,
    \label{eq:Klambda_pairwise_full_cov}
    \\
    \big[
    \widetilde{\myMat{K}}_{\lambda}
    \big]_{i,j}
    &=
    \myVec{q}_i^{\mathsf{H}}
    \myMat{E}_x^{\dagger}
    \myMat{U}_i
    \SigT_{ij}
    \myMat{U}_j^{\mathsf{T}}
    \left(
    \myMat{E}_x^{\dagger}
    \right)^{\mathsf{T}}
    \myVec{q}_j^{*}.
    \label{eq:Ktilde_lambda_pairwise_full_cov}
\end{align}
\end{subequations}
We use \eqref{eq:Klambda_pairwise_full} along with the  matrix $\myMat{\Lambda}_{\myVec{\theta}}$ defined in \eqref{eqn:lambdaDef} to form 
\begin{align}
    \myMat{C}_{{\myVec{\theta}}}  &\triangleq     \myMat{\Lambda}_{{\myVec{\theta}}}^{-1}     \myMat{K}_{\lambda}     \myMat{\Lambda}_{{\myVec{\theta}}}^{-\mathsf{H}} -   \myMat{\Lambda}_{{\myVec{\theta}}}^{-1}     \widetilde{\myMat{K}}_{\lambda}   \myMat{\Lambda}_{{\myVec{\theta}}}^{-\mathsf{T}} \in \mathbb{C}^{M\times M}.
    \label{eqn:propagated_eigenvalue_covariance}
\end{align}
Finally, we define the real-valued diagonal matrix
\begin{equation}
     \myMat{D}_{\myVec{\theta}}  \triangleq   \operatorname{diag}    \left( \frac{1}{\pi\cos\theta_1},    \ldots,    \frac{1}{\pi\cos\theta_M}    \right).
     \label{eqn:DthetaDef}
\end{equation}

\subsubsection{Covariance Characterization}
Using the quantities defined above, we can now characterize the error covariance \eqref{eqn:EstCov} of \ac{esprit} in the asymptotic regime of sufficiently small $\Delta\myMat{R}$. This is stated in the following theorem:
\begin{theorem}
\label{thm:full_doa_covariance_overlapping_esprit}
    Consider \ac{esprit} applied to recover \acp{doa} $\{\theta_i\}_{i=1}^M$. Then, when $\myMat{F}$ in \eqref{eqn:esprit_psi_matrix} exists and all its eigenvalues $\{\lambda_i\}_{i=1}^M$ are of order one, and when $\|\Delta \myMat{R}\| \ll 1$ with probability one, it holds that for all $\myVec{\theta}$ for which $\cos \theta_i$ bounded away from zero for all $i\in \{1,\ldots, M\}$, the error covariance satisfies
    \begin{equation}          
    \myMat{\Sigma}  = \frac{1}{2}     \myMat{D}_{\myVec{\theta}}
    \Re     \left\{     \myMat{C}_{{\myVec{\theta}}}
    \right\}     \myMat{D}_{\myVec{\theta}} +  \mathcal{O} \left( \mathbb{E}[\left\|  \Delta \myMat{R}  \right\|^{3}]  \right).
\label{eqn:full_doa_covariance_overlapping_esprit}
    \end{equation} 
\end{theorem}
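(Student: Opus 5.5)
We follow the first-order perturbation chain of \cite{yuen2002asymptotic}, carried out without assuming disjoint subarrays and keeping cross terms between different sources. The chain has four steps, each a first-order expansion in $\Delta\myMat{R}$ with remainder of second order:
\begin{enumerate}
\item $\Delta\myMat{R}$ perturbs the signal eigenvectors.
\item The eigenvector perturbation perturbs the ESPRIT eigenvalues $\lambda_i$.
\item The eigenvalue perturbation perturbs the angles $\theta_i$.
\item Forming $\mathbb{E}[\Delta\theta_i\Delta\theta_j]$ from these first-order terms gives the covariance up to third-order moments of $\Delta\myMat{R}$.
\end{enumerate}

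\emph{Eigenvector and eigenvalue perturbation.} First we use standard Hermitian perturbation theory. For a simple eigenpair $(\alpha_g,\myVec{\rho}_g)$, $g\le M$,
\[
\Delta\myVec{\rho}_g=\sum_{l\neq g}\frac{\myVec{\rho}_l^{\mathsf H}\Delta\myMat{R}\,\myVec{\rho}_g}{\alpha_g-\alpha_l}\myVec{\rho}_l+\mathcal{O}(\|\Delta\myMat{R}\|^2).
\]
Writing $\myVec{\rho}_l^{\mathsf H}\Delta\myMat{R}\myVec{\rho}_g=\sum_{a_1,a_2}[\myVec{\rho}_l^*]_{a_1}[\Delta\myMat{R}]_{a_1a_2}[\myVec{\rho}_g]_{a_2}$ in entries shows that the outer products have the form of \eqref{eq:cov_moments}:
\[
\mathbb{E}[\Delta\myVec{\rho}_g\Delta\myVec{\rho}_h^{\mathsf H}]=\TSigH_{gh},\qquad \mathbb{E}[\Delta\myVec{\rho}_g\Delta\myVec{\rho}_h^{\mathsf T}]=\TSigT_{gh}.
\]
Next we perturb $\myMat{E}_y\myVec{v}_i=\lambda_i\myMat{E}_x\myVec{v}_i$, which is equivalent to $\myMat{F}\myVec{v}_i=\lambda_i\myVec{v}_i$ when $\myMat{E}_x$ has full column rank (this is where existence of $\myMat{F}$ is used). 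Left-multiplying by $\myVec{q}_i^{\mathsf H}\myMat{E}_x^{\dagger}$ and using $\myVec{q}_i^{\mathsf H}\myVec{v}_i=1$ gives, to first order,
\[
\Delta\lambda_i=\myVec{q}_i^{\mathsf H}\myMat{E}_x^{\dagger}(\myMat{J}_2-\lambda_i\myMat{J}_1)\,\Delta\myMat{E}_{\rm S}\,\myVec{v}_i=\myVec{q}_i^{\mathsf H}\myMat{E}_x^{\dagger}\myMat{U}_i\,\Delta\myMat{E}_{\rm S}\,\myVec{v}_i .
\]
Here $\Delta\myMat{E}_{\rm S}\myVec{v}_i=\sum_g[\myVec{v}_i]_g\Delta\myVec{\rho}_g$. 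Only the rotational invariance $\myMat{J}_2\myMat{A}=\myMat{J}_1\myMat{A}\myMat{\Lambda}_{\myVec\theta}$ enters, so overlapping selections are covered automatically. Consequently $\mathbb{E}[\Delta\lambda_i\Delta\lambda_j]$ is exactly $[\widetilde{\myMat{K}}_{\lambda}]_{ij}$, built from $\SigT_{ij}$. The term $\mathbb{E}[\Delta\lambda_i\Delta\lambda_j^*]$ is a sandwich of $\SigH_{ij}$, and we must show it matches $[\myMat{K}_\lambda]_{ij}$ with $\myMat{W}_i=\myMat{J}_1-\lambda_i^*\myMat{J}_2$. For this we use $|\lambda_i|=1$ at the true point: $\myMat{U}_i=-\lambda_i\myMat{W}_i^{*}$ up to conjugation conventions. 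The unimodular factors then cancel or are absorbed into the $\myMat{\Lambda}_{\myVec\theta}^{-1}$ scalings of \eqref{eqn:propagated_eigenvalue_covariance}. I expect some bookkeeping here to reconcile the stated sign and conjugation conventions.

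\emph{Eigenvalue to angle, and the main obstacle.} We have $\theta_i=\sin^{-1}(-\angle\lambda_i/\pi)$ and $\angle\lambda=\Im\log\lambda$. A first-order expansion gives $\Delta\theta_i=-\frac{1}{\pi\cos\theta_i}\Im\{\lambda_i^{-1}\Delta\lambda_i\}$; the bound on $\cos\theta_i$ away from zero keeps this derivative and the Taylor remainder bounded. Set $z_i=\lambda_i^{-1}\Delta\lambda_i$. The identity $\Im z_i\Im z_j=\tfrac12\Re\{z_iz_j^*-z_iz_j\}$ gives
\[
\mathbb{E}[\Delta\theta_i\Delta\theta_j]=\tfrac12[\myMat{D}_{\myVec\theta}]_{ii}[\myMat{D}_{\myVec\theta}]_{jj}\Re\big\{\lambda_i^{-1}\mathbb{E}[\Delta\lambda_i\Delta\lambda_j^*]\lambda_j^{-*}-\lambda_i^{-1}\mathbb{E}[\Delta\lambda_i\Delta\lambda_j]\lambda_j^{-1}\big\}.
\]
In matrix form this is \eqref{eqn:full_doa_covariance_overlapping_esprit} with $\myMat{C}_{\myVec\theta}$ as in \eqref{eqn:propagated_eigenvalue_covariance}. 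The main obstacle is controlling the remainder. Each $\Delta\theta_i$ equals a linear term plus $\mathcal{O}(\|\Delta\myMat{R}\|^2)$, so the product of the linear term with the quadratic correction is a cubic term. The square of the quadratic correction is $\mathcal{O}(\|\Delta\myMat{R}\|^4)$, which is dominated by the cubic term. To make this rigorous we would invoke analyticity of simple eigenpairs near $\myMat{R}_X$, which holds because eigenvalues are distinct and the $\lambda_i$ are of order one, so the Taylor remainders are uniformly bounded. The hypothesis $\|\Delta\myMat{R}\|\ll1$ almost surely lets us take expectations of these bounds, giving the $\mathcal{O}(\mathbb{E}[\|\Delta\myMat{R}\|^3])$ term.
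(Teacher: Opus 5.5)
Your proposal follows the paper's proof essentially step for step. The chain is the same: first-order subspace perturbation, then $\Delta\lambda_i=\myVec{q}_i^{\mathsf H}\myMat{E}_x^{\dagger}\myMat{U}_i\Delta\myMat{E}_{\rm S}\myVec{v}_i$, then $\Delta\theta_i=-\Im\{\lambda_i^{-1}\Delta\lambda_i\}/(\pi\cos\theta_i)$, then the identity $\Im a\,\Im b=\tfrac12\Re\{ab^*-ab\}$, and finally the linear-times-quadratic cross term for the cubic remainder. You are more explicit than the paper in two places: you derive $\mathbb{E}[\Delta\myVec{\rho}_g\Delta\myVec{\rho}_h^{\mathsf H}]=\TSigH_{gh}$ from the eigenvector expansion, and you justify uniform remainder constants.

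The step you left as ``bookkeeping'' is also the one the paper does not carry out. The paper defines $\myMat{K}_\lambda\triangleq\mathbb{E}[\Delta\myVec{\lambda}\Delta\myVec{\lambda}^{\mathsf H}]$ in \eqref{eqn:Kmats} and attributes the closed form to the variance-only analysis of Yuen--Friedlander. Doing this step carefully changes what can be claimed.

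\begin{itemize}
\item With $|\lambda_i|=1$ the correct relation is $\myMat{U}_i=-\lambda_i\myMat{W}_i$, not $-\lambda_i\myMat{W}_i^{*}$.
\item Write $\myMat{K}^{\rm W}$ for the matrix defined entrywise by \eqref{eq:Klambda_pairwise_full_cov}. Then $\mathbb{E}[\Delta\lambda_i\Delta\lambda_j^*]=\lambda_i\lambda_j^*[\myMat{K}^{\rm W}]_{i,j}$, that is, $\mathbb{E}[\Delta\myVec{\lambda}\Delta\myVec{\lambda}^{\mathsf H}]=\myMat{\Lambda}_{\myVec{\theta}}\myMat{K}^{\rm W}\myMat{\Lambda}_{\myVec{\theta}}^{\mathsf H}$.
\item Hence $\myMat{K}_z=\myMat{K}^{\rm W}$ exactly: the scaling $\myMat{\Lambda}_{\myVec{\theta}}^{-1}(\cdot)\myMat{\Lambda}_{\myVec{\theta}}^{-\mathsf H}$ is used up in cancelling these phases. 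The stated $\myMat{C}_{\myVec{\theta}}$, read literally with $\myMat{K}_\lambda=\myMat{K}^{\rm W}$, applies that scaling a second time.
\item On the diagonal the two agree, since $|\lambda_i|^{-2}=1$; this is all Corollary~\ref{cor:nonoverlap_variance_special_case} uses.
\item Off the diagonal they differ by the factor $\lambda_i^*\lambda_j=e^{j\pi(\sin\theta_i-\sin\theta_j)}$, which changes the real part.
\end{itemize}

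So your argument, like the paper's, establishes \eqref{eqn:full_doa_covariance_overlapping_esprit} only with $\myMat{K}_\lambda$ read as $\mathbb{E}[\Delta\myVec{\lambda}\Delta\myVec{\lambda}^{\mathsf H}]$. Equivalently, you may replace $\myMat{\Lambda}_{\myVec{\theta}}^{-1}\myMat{K}_\lambda\myMat{\Lambda}_{\myVec{\theta}}^{-\mathsf H}$ by $\myMat{K}^{\rm W}$ itself. You should state it that way rather than expect the unimodular factors to reconcile with the literal $\myMat{W}$-form.

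The pseudo-covariance term needs no such care: $\widetilde{\myMat{K}}_\lambda$ built from $\myMat{U}_i$ is exactly $\mathbb{E}[\Delta\myVec{\lambda}\Delta\myVec{\lambda}^{\mathsf T}]$. One minor point there: the coefficient of $\Delta\myVec{\rho}_h^{\mathsf T}$ is $\myVec{\rho}_n^{\mathsf H}\Delta\myMat{R}\,\myVec{\rho}_h$. Your identity $\mathbb{E}[\Delta\myVec{\rho}_g\Delta\myVec{\rho}_h^{\mathsf T}]=\TSigT_{gh}$ therefore needs $[\myVec{\rho}_n^*]_{b_1}$ in place of the printed $[\myVec{\rho}_n]_{b_1}$ in \eqref{eq:cov_moments_transpose_eq}.
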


\begin{IEEEproof}
    See Appendix~\ref{app:proof_full_doa_covariance}.
\end{IEEEproof}

Theorem~\ref{thm:full_doa_covariance_overlapping_esprit}
 provides a characterization of the full covariance (including inter source correlation).
 This theorem specializes the uncertainty formulation provided in \cite{yuen2002asymptotic}, as stated in the following corollary: 
 \begin{corollary}[Variance-only non-overlapping ESPRIT]
\label{cor:nonoverlap_variance_special_case}
For selection matrices $\myMat{J}_1  = [\myMat{I}_{N/2}, \myMat{0}] , \myMat{J}_2= [\myMat{0}, \myMat{I}_{N/2}]$ and assuming that
 $|\lambda_i| = 1$, it holds that \eqref{eq:DOA_var_half} coincides with \cite[Eq. 58]{yuen2002asymptotic}.
\end{corollary}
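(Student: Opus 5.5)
The plan is to read off the diagonal of \eqref{eqn:full_doa_covariance_overlapping_esprit} in Theorem~\ref{thm:full_doa_covariance_overlapping_esprit}, simplify it under the two hypotheses of the corollary, and then rewrite the result in the notation of \cite{yuen2002asymptotic} so that it can be compared term by term with \cite[Eq.~58]{yuen2002asymptotic}. I take \eqref{eq:DOA_var_half} to denote the per-source variance. Setting $i=j$ and using that $\myMat{D}_{\myVec{\theta}}$ is real and diagonal gives
\begin{equation*}
[\myMat{\Sigma}]_{ii}\approx \frac{1}{2\pi^2\cos^2\theta_i}\,\Re\Big\{[\myMat{C}_{\myVec{\theta}}]_{ii}\Big\}.
\end{equation*}
Since $\myMat{\Lambda}_{\myVec{\theta}}$ is diagonal, \eqref{eqn:propagated_eigenvalue_covariance} yields
\begin{equation*}
[\myMat{C}_{\myVec{\theta}}]_{ii}=|\lambda_i|^{-2}[\myMat{K}_\lambda]_{ii}-\lambda_i^{-2}[\widetilde{\myMat{K}}_\lambda]_{ii}.
\end{equation*}
With $|\lambda_i|=1$ this becomes $[\myMat{K}_\lambda]_{ii}-(\lambda_i^{*})^{2}[\widetilde{\myMat{K}}_\lambda]_{ii}$.

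The second step uses $|\lambda_i|=1$ again, now through $\lambda_i^*=\lambda_i^{-1}$. This gives
\begin{equation*}
\myMat{W}_i=\myMat{J}_1-\lambda_i^{-1}\myMat{J}_2=-\lambda_i^{-1}\myMat{U}_i.
\end{equation*}
Hence $[\myMat{K}_\lambda]_{ii}=\myVec{q}_i^{\mathsf H}\myMat{E}_x^\dagger\myMat{U}_i\SigH_{ii}\myMat{U}_i^{\mathsf H}(\myMat{E}_x^\dagger)^{\mathsf H}\myVec{q}_i$, because the scalar factor contributes $|\lambda_i|^{-2}=1$. Both $\myMat{K}_\lambda$ and $\widetilde{\myMat{K}}_\lambda$ are then expressed through the single operator $\myVec{q}_i^{\mathsf H}\myMat{E}_x^\dagger\myMat{U}_i$.

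The third step identifies the building blocks with second moments of first-order perturbations. Writing the standard first-order eigenvector perturbation
\begin{equation*}
\Delta\myVec{\rho}_g=\sum_{l\neq g}\frac{\myVec{\rho}_l\myVec{\rho}_l^{\mathsf H}\Delta\myMat{R}\,\myVec{\rho}_g}{\alpha_g-\alpha_l},
\end{equation*}
I would check directly from \eqref{eq:cov_moments} that $\TSigH_{gh}=\mathbb{E}[\Delta\myVec{\rho}_g\Delta\myVec{\rho}_h^{\mathsf H}]$ and $\TSigT_{gh}=\mathbb{E}[\Delta\myVec{\rho}_g\Delta\myVec{\rho}_h^{\mathsf T}]$. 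It follows that $\SigH_{ii}=\mathbb{E}[\Delta\myMat{E}_{\rm S}\myVec{v}_i(\Delta\myMat{E}_{\rm S}\myVec{v}_i)^{\mathsf H}]$, and similarly for $\SigT_{ii}$.

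Define the scalar
\begin{equation*}
\Delta\lambda_i\triangleq \myVec{q}_i^{\mathsf H}\myMat{E}_x^\dagger\myMat{U}_i\Delta\myMat{E}_{\rm S}\myVec{v}_i,
\end{equation*}
which is the eigenvalue perturbation under the normalization $\myVec{q}_i^{\mathsf H}\myVec{v}_i=1$. With this definition, $[\myMat{K}_\lambda]_{ii}=\mathbb{E}|\Delta\lambda_i|^2$ and $[\widetilde{\myMat{K}}_\lambda]_{ii}=\mathbb{E}[\Delta\lambda_i^2]$. The diagonal entry therefore reads
\begin{equation*}
[\myMat{\Sigma}]_{ii}\approx\frac{\mathbb{E}|\Delta\lambda_i|^2-\Re\{(\lambda_i^{*})^2\,\mathbb{E}[\Delta\lambda_i^2]\}}{2\pi^2\cos^2\theta_i}.
\end{equation*}
This is the structure of \cite[Eq.~58]{yuen2002asymptotic}: the phase-error variance obtained from $\Im\{\lambda_i^*\Delta\lambda_i\}$ and mapped through the derivative of $\sin^{-1}$.

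Finally, I would substitute $\myMat{J}_1=[\myMat{I}_{N/2},\myMat{0}]$ and $\myMat{J}_2=[\myMat{0},\myMat{I}_{N/2}]$, so that $\myMat{E}_x$ and $\myMat{E}_y$ are the disjoint half-array blocks of $\myMat{E}_{\rm S}$, matching the setting of \cite{yuen2002asymptotic}. I would then align notation: their eigenvector perturbation expansion, their left and right eigenvectors of the rotational operator, and their use of $\mathbb{E}[\Delta\myMat{R}\otimes\Delta\myMat{R}^{*}]$ and $\mathbb{E}[\Delta\myMat{R}\otimes\Delta\myMat{R}]$.

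I expect this last bookkeeping to be the main obstacle. Three conventions must be reconciled: the normalization of $\myVec{q}_i$ against $\myVec{v}_i$, the sign and conjugation convention in the phase of $\lambda_i$ (their $e^{+j}$ versus our $e^{-j\pi\sin\theta}$), and whether their pseudo-variance term appears as $\Re\{\lambda_i^{*2}\,\cdot\}$ or via the transposed moment $\TSigT$. My approach is to verify that each of these differences leaves the final real-valued expression unchanged, using $|\lambda_i|=1$ and invariance under conjugation.
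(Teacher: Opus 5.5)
Your proposal is correct and follows the same route as the paper's proof: read off the $(i,i)$ entry of \eqref{eqn:full_doa_covariance_overlapping_esprit}, use $|\lambda_i|=1$ to write $\lambda_i^{-2}=(\lambda_i^{*})^{2}$, and substitute the expressions for $[\myMat{K}_\lambda]_{ii}$ and $[\widetilde{\myMat{K}}_\lambda]_{ii}$. Your extra steps are valid and go beyond the paper, which asserts agreement with \cite[Eq.~58]{yuen2002asymptotic} right after the substitution without aligning notation; these steps are the reduction $\myMat{W}_i=-\lambda_i^{-1}\myMat{U}_i$ and the identification of the diagonal entries with $\mathbb{E}|\Delta\lambda_i|^2$ and $\mathbb{E}[\Delta\lambda_i^2]$. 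One caution for your direct check: as printed, the definition of $\TSigT_{gh}$ has $[\myVec{\rho}_n]_{b_1}$ where $\mathbb{E}[\Delta\myVec{\rho}_g\Delta\myVec{\rho}_h^{\mathsf T}]$ requires $[\myVec{\rho}^{*}_n]_{b_1}$, so that identity (and hence $[\widetilde{\myMat{K}}_\lambda]_{ii}=\mathbb{E}[\Delta\lambda_i^2]$) holds only once this conjugation is corrected.
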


\begin{IEEEproof}
See Appendix~\ref{app:nonoverlap_variance_special_case}.  
\end{IEEEproof}

\subsubsection{Implications}
Theorem~\ref{thm:full_doa_covariance_overlapping_esprit} shows that an estimated covariance matrix that satisfies the stated assumptions can be used for both  \ac{doa} recovery and uncertainty quantification. The same recovered covariance matrix $\hat{\myMat{R}}_X$  is first used to recover
\acp{doa} as  $\hat{\myVec{\theta}}$ through \ac{esprit}, and can then be used  to estimate the full uncertainty covariance $\hat{\myMat{\Sigma}}$ using \eqref{eqn:full_doa_covariance_overlapping_esprit}. 
The diagonal entries of \(\hat{\myMat{\Sigma}}\) quantify the marginal uncertainty
of each estimated \ac{doa}, while the off-diagonal entries quantify the coupling between
the estimation errors of different sources. 

\subsection{Uncertainty-Aware SubspaceNet}
\label{ssec:Algorithm}
We next use Theorem~\ref{thm:full_doa_covariance_overlapping_esprit} to extend SubspaceNet into an uncertainty-aware \ac{doa} estimator. The proposed architecture retains the surrogate covariance recovery principle of SubspaceNet, but uses the recovered covariance for two coupled model-based operations: \ac{esprit}-based \ac{doa} estimation and perturbation-based reconstruction of the corresponding error covariance. The method is not tied to a particular choice of subarrays. The selection matrices $\myMat{J}_1,\myMat{J}_2$ are treated as design parameters satisfying the rotational invariance property required by \ac{esprit}.

Let $g_{\myVec{\psi}}(\cdot)$ denote the trainable surrogate covariance recovery mapping, including any Hermitian or positive-semidefinite projection used by the implemented SubspaceNet architecture (i.e., the \ac{dnn} in \eqref{eqn:SSN}). Given the received measurements $\myMat{X}$, we write
\begin{equation}
\tilde{\myMat{R}}_X = g_{\myVec{\psi}}(\myMat{X}).
\label{eqn:surrogate_covariance_uassn}
\end{equation}
The proposed estimator maps $\myMat{X}$ into
\begin{equation}
\left( \hat{\myVec{\theta}}, \hat{\myMat{\Sigma}} \right)
= {\rm USSN}_{\myMat{J}_1,\myMat{J}_2} \left( \myMat{X};\myVec{\psi}\right),
\label{eqn:ussn_mapping}
\end{equation}
where $\hat{\myVec{\theta}}$ is obtained by applying \ac{esprit} to $\tilde{\myMat{R}}_X$, and $\hat{\myMat{\Sigma}}$ is obtained by substituting the resulting \ac{esprit} quantities into the covariance characterization of Theorem~\ref{thm:full_doa_covariance_overlapping_esprit}.

\subsubsection{Inference}
During inference, the surrogate covariance $\tilde{\myMat{R}}_X$ replaces the empirical covariance processed by \ac{esprit}. Specifically, we compute
\begin{equation}
\hat{\myVec{\theta}} =
{\rm ESPRIT}_{\myMat{J}_1,\myMat{J}_2} \left( \tilde{\myMat{R}}_X \right),
\label{eqn:ussn_doa_recovery}
\end{equation}
where ${\rm ESPRIT}_{\myMat{J}_1,\myMat{J}_2}(\cdot)$ denotes the \ac{esprit} procedure in \eqref{eqn:esprit_psi_matrix}-\eqref{eqn:ESPRIT} using the selected subarrays. This computation also provides the auxiliary quantities required for uncertainty extraction, including the estimated signal subspace $\tilde{\myMat{E}}_{\rm S}$, the selected subspaces
\begin{equation*}
\tilde{\myMat{E}}_x = \myMat{J}_1\tilde{\myMat{E}}_{\rm S},
\qquad
\tilde{\myMat{E}}_y = \myMat{J}_2\tilde{\myMat{E}}_{\rm S},
\end{equation*}
the estimated rotational invariance matrix
$\tilde{\myMat{F}} = \tilde{\myMat{E}}_x^{\dagger} \tilde{\myMat{E}}_y$, 
and its eigenvalues and right/left eigenvectors
${(\hat{\lambda}_i,\hat{\myVec{v}}_i,\hat{\myVec{q}}_i)}_{i=1}^{M}$.

The covariance characterization in Theorem~\ref{thm:full_doa_covariance_overlapping_esprit} depends on the second-order moments of the covariance perturbation $\Delta\myMat{R}$. These statistics are not available during inference. We therefore adopt a plug-in approximation in which the unknown covariance $\myMat{R}_X$ is replaced by the learned surrogate covariance $\tilde{\myMat{R}}_X$. Assuming locally that the snapshots are zero-mean complex Gaussian with covariance $\tilde{\myMat{R}}_X$, Wick's factorization yields
\begin{subequations}
\label{eqn:gaussian_plugin_moments}
\begin{align}
\widehat{C}^{(\mathsf H)}_{a_1a_2,b_1b_2} &\triangleq \frac{1}{T}\left[ \tilde{\myMat{R}}_X \right]_{a_1b_1} \left[ \tilde{\myMat{R}}_X \right]_{b_2a_2},
\label{eqn:gaussian_plugin_moments_H}
\\
\widehat{C}^{(\mathsf T)}_{a_1a_2,b_1b_2} &\triangleq \frac{1}{T} \left[ \tilde{\myMat{R}}_X \right]_{a_1b_2} \left[ \tilde{\myMat{R}}_X \right]_{b_1a_2}.
\label{eqn:gaussian_plugin_moments_T}
\end{align}
\end{subequations}
These plug-in moments are substituted into \eqref{eq:cov_moments} in place of the corresponding perturbation moments. The eigenvalue decomposition of $\tilde{\myMat{R}}_X$ provides the quantities $\{(\hat{\alpha}_i, \hat{\myVec{\rho}}_i)\}_{i=1}^{N}$ used in \eqref{eq:cov_moments}. Then, replacing all quantities in \eqref{eq:cov_moments}-\eqref{eqn:DthetaDef} by their estimates yields $\widehat{\myMat{C}}_{\hat{\myVec{\theta}}}$ and $\myMat{D}_{\hat{\myVec{\theta}}}$. The predicted \ac{doa} error covariance is finally computed as
\begin{equation}
    \hat{\myMat{\Sigma}} =  \frac{1}{2}     \myMat{D}_{\hat{\myVec{\theta}}}     \Re\!\left\{    \widehat{\myMat{C}}_{\hat{\myVec{\theta}}} \right\}  \myMat{D}_{\hat{\myVec{\theta}}}.
    \label{eqn:ussn_covariance_output}
\end{equation}
The diagonal entries of $\hat{\myMat{\Sigma}}$ quantify the marginal uncertainty of the recovered \acp{doa}, while its off-diagonal entries quantify the correlations between their estimation errors. The overall procedure is summarized as Algorithm~\ref{alg:ussn_inference}.

\begin{algorithm}
\caption{Uncertainty-Aware SubspaceNet Inference}
\label{alg:ussn_inference}
\SetKwInOut{Initialization}{Init}
\Initialization{Trained \ac{dnn}  $\myVec{\psi}$;  selection matrices $\myMat{J}_1,\myMat{J}_2$}
\SetKwInOut{Input}{Input}
\Input{Array measurements $\myMat{X}$; number of sources $M$;}

Compute the surrogate covariance
$\tilde{\myMat{R}}_X = g_{\myVec{\psi}}(\myMat{X})$;

Compute the signal subspace $\tilde{\myMat{E}}_{\rm S}$ from the \ac{evd} of $\tilde{\myMat{R}}_X$;

Set
$\tilde{\myMat{E}}_x = \myMat{J}_1\tilde{\myMat{E}}_{\rm S}$ and
$\tilde{\myMat{E}}_y = \myMat{J}_2\tilde{\myMat{E}}_{\rm S}$;

Compute the \ac{esprit} rotational matrix
$\tilde{\myMat{F}} = \tilde{\myMat{E}}_x^{\dagger}\tilde{\myMat{E}}_y$;

Set ${\hat{\lambda}_i,\hat{\myVec{v}}_i,\hat{\myVec{q}}_i}_{i=1}^{M}$ from the \ac{evd} of $\tilde{\myMat{F}}$;

Recover the \acp{doa} $\hat{\myVec{\theta}} =
\sin^{-1}\big(-\angle\hat{\myVec{\lambda}}/\pi\big)$;

Set ${(\hat{\alpha}_i,\hat{\myVec{\rho}}_i)}_{i=1}^{N}$ from the \ac{evd} of $\tilde{\myMat{R}}_X$;

Estimate  
$\widehat{C}^{(\mathsf H)}_{a_1a_2,b_1b_2}$ and
$\widehat{C}^{(\mathsf T)}_{a_1a_2,b_1b_2}$
using  \eqref{eqn:gaussian_plugin_moments};

Estimate 
$\TSigH_{gh}$ and $\TSigT_{gh}$
 using  \eqref{eq:cov_moments};

Compute  
${\myMat{K}}_{\lambda}$ and
${\widetilde{\myMat{K}}}_{\lambda}$
using \eqref{eq:Klambda_pairwise_full};

Form $\widehat{\myMat{C}}_{\hat{\myVec{\theta}}}$ using
\eqref{eqn:propagated_eigenvalue_covariance};

Compute the \ac{doa} error covariance estimate
$\hat{\myMat{\Sigma}}$
using \eqref{eqn:ussn_covariance_output};

\KwRet{$\hat{\myVec{\theta}},\hat{\myMat{\Sigma}}$}
\end{algorithm}

\subsubsection{Source Matching and Losses}
Training is carried out using the labeled dataset $\mySet{D}$ in \eqref{eqn:DataSet}. Since \ac{esprit} returns an unordered set of directions, the estimated sources must be matched to the ground-truth sources before evaluating either the \ac{doa} error or the covariance prediction. Let $\mathcal{P}_M$ be the set of $M\times M$ permutation matrices. For a training pair $(\myMat{X},\myVec{\theta})$, we define the optimal assignment as
\begin{equation}
    \myMat{P}^{\star} = \arg\min_{\myMat{P}\in\mathcal{P}_M}     \left\| {\rm mod}_{\pi}\left( \myVec{\theta} - \myMat{P}\hat{\myVec{\theta}} \right) \right\|^2,
    \label{eqn:optimal_assignment_ussn}
\end{equation}
where ${\rm mod}_{\pi}(\cdot)$ denotes the periodic angular error over the considered \ac{doa} domain. The matched angular error vector is
\begin{equation}
    \myVec{e}_{\theta} = {\rm mod}_{\pi} \left( \myVec{\theta} - \myMat{P}^{\star}\hat{\myVec{\theta}} \right).
    \label{eqn:matched_error_vector}
\end{equation}
The corresponding covariance matrix must be permuted using the same assignment $
    \bar{\myMat{\Sigma}} =  \myMat{P}^{\star} \hat{\myMat{\Sigma}}  \left(  \myMat{P}^{\star}  \right)^{\mathsf T}$. 
This step is needed for training with the full covariance matrix, since both its diagonal and off-diagonal entries are source-order dependent.

The \ac{doa} estimation loss is defined as the \ac{rmspe}
\begin{equation}
    \ell_{\rm DoA}  \left(  \myMat{X},\myVec{\theta};\myVec{\psi}  \right) = \left( \frac{1}{M} \left\|  \myVec{e}_{\theta} \right\|^2 \right)^{1/2}.
    \label{eqn:doa_loss_ussn}
\end{equation}
To train the model to produce reliable covariance estimates, we use a loss  that compares the estimated covariance with the empirical error covariance. Since \eqref{eqn:ussn_covariance_output} is obtained through an asymptotic approximation and may not be strictly positive definite numerically, we use the regularized covariance
\begin{equation}
    \bar{\myMat{\Sigma}}_{\epsilon} = \Pi_{\mathbb{S}_{+}^{M}}  \left( \frac{1}{2} \big( \bar{\myMat{\Sigma}}  + \bar{\myMat{\Sigma}}^{\mathsf T}  \big)   \right)
    +  \epsilon \cdot \myMat{I}_{M},
    \label{eqn:regularized_covariance}
\end{equation}
where $\Pi_{\mathbb{S}_{+}^{M}}(\cdot)$ denotes projection onto the cone of positive semidefinite matrices and $\epsilon>0$ is a small diagonal loading coefficient. The uncertainty loss is then set to
\begin{equation}
    \ell_{\rm UQ} \left( \myMat{X},\myVec{\theta};\myVec{\psi} \right)
    = \left[
        \frac{1}{M} \sum_{i=1}^{M}
        \left(
            \left[\bar{\myMat{\Sigma}}\right]_{i,i}
            - \left[\myVec{e}_{\theta}\right]_{i}^{2}
        \right)^{2}
    \right]^{1/2}.
    \label{eqn:uq_loss_ussn}
\end{equation}
This objective encourages the predicted covariance to explain the observed \ac{doa} estimation error. 
\subsubsection{Training Procedure}
We train the surrogate covariance recovery mapping $g_{\myVec{\psi}}$ in two stages. The first stage follows the original SubspaceNet training principle and optimizes only the \ac{doa} estimation accuracy. Its objective is
\begin{equation}
    \mathcal{L}^{(1)}_{\mySet{D}} \left( \myVec{\psi} \right)  =
    \frac{1}{|\mySet{D}|} \sum_{(\myMat{X},\myVec{\theta})\in\mySet{D}} \ell_{\rm DoA} \left(  \myMat{X},\myVec{\theta};\myVec{\psi} \right).
    \label{eqn:stage1_loss_ussn}
\end{equation}
This stage initializes the learned covariance to support reliable \ac{esprit}-based angle recovery.

The second stage fine-tunes the same surrogate covariance recovery module using both the \ac{doa} accuracy loss and the uncertainty loss. The dataset-level objective is
\begin{equation}
    \mathcal{L}^{(2)}_{\mySet{D}}\! \left( \myVec{\psi} \right) \!= \! \frac{1}{|\mySet{D}|} \!\sum_{(\myMat{X},\myVec{\theta})\in\mySet{D}} \!\! \ell_{\rm DoA} \!\left( \myMat{X},\myVec{\theta};\myVec{\psi} \right)\! + \!\mu \ell_{\rm UQ}\! \left( \myMat{X},\myVec{\theta};\myVec{\psi} \right),
    \label{eqn:stage2_loss_ussn}
\end{equation}
where $\mu\geq 0$ controls the relative weight assigned to uncertainty calibration. It can be fixed or gradually increased during training to avoid destabilizing the initial \ac{doa}-oriented solution.

The training procedure based on \ac{sgd} iterations is summarized in Algorithm~\ref{alg:ussn_training}. The \ac{dnn} is trained to output a surrogate covariance matrix whose induced \ac{esprit} estimate and perturbation-based covariance reconstruction are jointly reliable. Thus, uncertainty quantification is obtained through the preserved subspace structure, rather than through an additional black-box uncertainty head.

\begin{algorithm}
\caption{Training Uncertainty-Aware SubspaceNet}
\label{alg:ussn_training}
\SetKwInOut{Initialization}{Init}
\Initialization{Learning rate $\eta$; epochs $(e_1,e_2)$; $\#$ batches $B$; \\ hyperparameters $\mu, \epsilon$;  matrices $\myMat{J}_1,\myMat{J}_2$; initial  $\myVec{\psi}$}
\SetKwInOut{Input}{Input}
\Input{Training dataset $\mySet{D}$}

\nonl\textbf{Stage 1: DoA-Oriented Training};

\For{$e=1,\ldots,e_1$}
{
Randomly divide $\mySet{D}$ into $B$ batches ${\mySet{D}^{(b)}}_{b=1}^{B}$;

\For{each batch $\mySet{D}^{(b)}$}
{
    Compute $\tilde{\myMat{R}}_X = g_{\myVec{\psi}}(\myMat{X})$ for all $(\myMat{X},\myVec{\theta})\in\mySet{D}^{(b)}$\;

    Estimate $\hat{\myVec{\theta}}$ using \ac{esprit} with  $\myMat{J}_1,\myMat{J}_2$\;

    Compute  source assignment $\myMat{P}^{\star}$ using \eqref{eqn:optimal_assignment_ussn}\;

    Compute the \ac{doa} loss
    $\mathcal{L}^{(1)}_{\mySet{D}^{(b)}}(\myVec{\psi})$
    using \eqref{eqn:doa_loss_ussn}\;

    Update
    $\myVec{\psi}
    \gets
    \myVec{\psi}
    -
    \eta
    \nabla_{\myVec{\psi}}
    \mathcal{L}^{(1)}_{\mySet{D}^{(b)}}(\myVec{\psi})$\;
} 

}

\nonl\textbf{Stage 2: Uncertainty-Aware Fine-Tuning};

\For{$e=1,\ldots,e_2$}
{
Randomly divide $\mySet{D}$ into $B$ batches ${\mySet{D}^{(b)}}_{b=1}^{B}$; 

\For{each batch $\mySet{D}^{(b)}$}
{
    Compute
    $\hat{\myVec{\theta}}$ and $\hat{\myMat{\Sigma}}$
    for all $(\myMat{X},\myVec{\theta})\in\mySet{D}^{(b)}$\;

    Compute  source assignment $\myMat{P}^{\star}$ using \eqref{eqn:optimal_assignment_ussn}\;

    Form the matched angular error $\myVec{e}_{\theta}$ using \eqref{eqn:matched_error_vector}\;

    Permute 
    $\bar{\myMat{\Sigma}}
    =
    \myMat{P}^{\star}
    \hat{\myMat{\Sigma}}
    (\myMat{P}^{\star})^{\mathsf T}$\;

    Form the regularized covariance
    $\bar{\myMat{\Sigma}}_{\epsilon}$
    using \eqref{eqn:regularized_covariance}\;

    Compute  $\ell_{\rm DoA}$ using \eqref{eqn:doa_loss_ussn} and  $\ell_{\rm UQ}$ using \eqref{eqn:uq_loss_ussn}\;

    Compute the batch loss
    $\mathcal{L}^{(2)}_{\mySet{D}^{(b)}}(\myVec{\psi})$
    using \eqref{eqn:stage2_loss_ussn}\;

    Update
    $\myVec{\psi}
    \gets
    \myVec{\psi}
    -
    \eta
    \nabla_{\myVec{\psi}}
    \mathcal{L}^{(2)}_{\mySet{D}^{(b)}}(\myVec{\psi})$\;
} 
}

\KwRet{$\myVec{\psi}$}
\end{algorithm}

\subsection{Discussion} 
\label{ssec:Discssion} 
\noindent
{\bf Properties:} 
The proposed framework addresses the uncertainty-aware \ac{doa} estimation problem formulated in Subsection~\ref{ssec:Problem} by combining data-aided robustness with model-based uncertainty characterization. The surrogate covariance recovery module \(g_{\myVec{\psi}}(\cdot)\) allows the covariance processed by \ac{esprit} to deviate from the empirical covariance \(\hat{\myMat{R}}_X\), thereby enabling the learned representation to compensate for the challenging conditions \ref{itm:Coherent}-\ref{itm:Snapshots}. At the same time, the subsequent \ac{esprit} processing preserves the subspace structure needed for perturbation-based uncertainty extraction. The output covariance \(\hat{\myMat{\Sigma}}\) thus follows from the same surrogate covariance and \ac{esprit} quantities used to estimate \(\hat{\myVec{\theta}}\). The two-stage  procedure further adapts the surrogate covariance to support both accurate angle recovery and reliable covariance prediction, while the Gaussian  approximation in \eqref{eqn:gaussian_plugin_moments} enables the required perturbation statistics to be computed from the recovered covariance at inference time.

\smallskip
\noindent
{\bf Complexity:}  
Standard \ac{esprit}, when applied to an available covariance estimate, is dominated by the \ac{evd} of the \(N\times N\) covariance matrix and the \(M\times M\) rotational-invariance eigendecomposition, yielding complexity on the order of \(\mathcal{O}(N^3+NM^2+M^3)\), in addition to the cost of forming the empirical covariance, \(\mathcal{O}(N^2T)\). SubspaceNet with pointwise \ac{doa} estimation replaces the empirical covariance by the learned surrogate covariance, adding the \ac{dnn} forward-pass cost \(C_{\rm DNN}\), but otherwise retaining the same \ac{esprit} complexity. Algorithm~\ref{alg:ussn_inference} uses the same \ac{dnn} output and reuses the subspace quantities computed by \ac{esprit}; its additional cost comes from the covariance-reconstruction block in \eqref{eq:cov_moments}-\eqref{eqn:ussn_covariance_output}. When implemented using the Gaussian plug-in structure without explicitly storing fourth-order perturbation tensors, this additional deterministic post-processing scales as \(\mathcal{O}(M^2N^3+M^3N^2)\), which reduces to \(\mathcal{O}(M^2N^3)\) as \(M\leq N\). Thus, compared with SubspaceNet point estimation, the proposed method increases only the model-based post-processing cost, without requiring additional \ac{dnn} branches or ensembles.

\smallskip
\noindent
{\bf Potential Extensions:}   
Our characterization is derived for \ac{esprit}, and is therefore naturally suited to arrays admitting a rotational-invariance structure, such as \acp{ula} or arrays that can be partitioned into shift-invariant subarrays. Still, the surrogate covariance recovery  underlying SubspaceNet is not inherently restricted to \acp{ula}. Extending the perturbation-based covariance characterization to other subspace estimators and array geometries, including general planar arrays, sparse arrays, and near-field  settings, is  left  for future research. Another direction combines the proposed model-based covariance recovery with learning-based uncertainty quantification tools. For example, conformal prediction could be used to calibrate the coverage of the covariance-induced confidence regions~\cite{weisman2026conformal}, while Bayesian \acp{dnn}   could be used to account for uncertainty in the learned surrogate covariance~\cite{dahan2025bayesian}. Such combinations may provide complementary uncertainty information while retaining the  interpretable covariance structure induced by our \ac{esprit}-based analysis.

\section{Numerical Study}
\label{sec: Numerical Study}

\input{plotting_latex_code/non_coherent_snr_sweep_figures_pgfplots.tex}

We numerically evaluate the proposed Uncertainty-Aware SubspaceNet (termed here USSN) framework in terms of both \ac{doa} estimation accuracy and uncertainty reliability\footnote{The source code and full hyperparameters can be found at at \url{https://github.com/RazZohar/Uncertainty_SubspaceNet.git}.}. The experiments are designed to examine whether USSN preserves the favorable behavior of classical subspace processing in nominal regimes, while maintaining reliable covariance prediction under the challenging conditions \ref{itm:Coherent}-\ref{itm:Snapshots}. 

\input{plotting_latex_code/non_coherent_eta_sweep_figures_pgfplots.tex}

\subsection{Experimental Setup}
\label{subsec:Numerical Study Setup}

\subsubsection{Signal Generation}
Unless stated otherwise, we consider a \ac{ula} with $N=8$ elements and generate $T=100$ snapshots from $M=2$ sources. The source \acp{doa} are drawn uniformly over the angular sector $[-\frac{\pi}{2},\frac{\pi}{2}]$, and the  \ac{snr} is set to $10$~dB. The source symbols and additive noise are modeled as i.i.d. circularly symmetric complex Gaussian, except in experiments explicitly considering non-Gaussian sources. In coherent-source scenarios, the sources transmit fully correlated signals. Array miscalibration is modeled through a perturbation parameter $\eta$, which controls the deviation of the true array response from the nominal half-wavelength \ac{ula} steering model in \eqref{eqn:SteeringVec}.

\input{plotting_latex_code/non_coherent_M_sweep_figures_pgfplots.tex}

\subsubsection{Benchmarks}
We compare USSN with both classical and learning-based baselines. The classical baselines include standard \ac{esprit} with covariance-based uncertainty extraction, as well as the conditional \ac{crb} when available. The latter is used as a reference for achievable accuracy and uncertainty under the corresponding idealized statistical model.

\input{plotting_latex_code/fixed_anchor_sweep_snr0}
The learning-based baselines include SubspaceNet trained only for \ac{doa} accuracy, denoted SubspaceNet Stage~1, and the proposed USSN obtained after uncertainty-aware fine-tuning. We also compare with data-driven uncertainty baselines, including a TransMUSIC-based architecture~\cite{ji2024transmusic} augmented with an uncertainty-prediction head, and a  data-driven \ac{cnn}  that maps received  samples to \acp{doa}~\cite{zheng2024deepdoa}, which we adapt and train to jointly output \acp{doa}
and their associated uncertainty estimates.  In the ablation study, we further compare with uncertainty quantification based on Monte-Carlo dropout (termed here {\em MC}) as in~\cite{fu2026deep} and conformal prediction  (termed {\em CP}) following \cite{khurjekar2023uncertainty}. All learning-based methods are trained and evaluated using the same datasets within each experimental scenario.

\input{plotting_latex_code/modulation_16qam_snr_sweep_figures_pgfplots.tex}

\subsubsection{Evaluation Metrics}
We evaluate \ac{doa} recovery accuracy using the \ac{rmspe} in \eqref{eqn:doa_loss_ussn}. The uncertainty prediction is evaluated using the uncertainty loss in \eqref{eqn:uq_loss_ussn}, which accounts for the predicted covariance and the observed \ac{doa} estimation error. Lower values of both metrics indicate improved performance.
In addition, we use the normalized \ac{anees} as a calibration diagnostic for the predicted covariance~\cite{dahan2025bayesian}. For a test set of $K$ realizations, let $\myVec{e}_{\theta}^{(k)}$ denote the matched periodic \ac{doa} error vector of the $k$th realization, and let $\bar{\myMat{\Sigma}}_{\epsilon}^{(k)}$ be its corresponding matched  covariance estimate. We compute the \ac{anees} as
\begin{equation}
\operatorname{ANEES} = \frac{1}{KM} \sum_{k=1}^{K} \left( \myVec{e}_{\theta}^{(k)} \right)^{\mathsf T} \left( \bar{\myMat{\Sigma}}_{\epsilon}^{(k)} \right)^{-1} \myVec{e}_{\theta}^{(k)}.
\label{eqn:nanees_metric}
\end{equation}
A well-calibrated covariance estimator is expected to yield $\operatorname{ANEES}\approx 1$. In the figures, we report the logarithm of this quantity, such that the ideal value is $\log(1)=0$. Unless stated otherwise, all metrics are averaged over $K=5000$  test samples.
The dashed reference curve in the three-panel benchmark figures denotes the \ac{crb} in the \ac{doa} accuracy panel, the \ac{crb}-based uncertainty reference in the uncertainty-loss panel, and the ideal value $\log(1)=0$ in the normalized-\ac{anees} panel. 

\input{plotting_latex_code/coherent_snr_sweep_figures_pgfplots.tex}

%
\input{plotting_latex_code/coherent_eta_sweep_figures_pgfplots.tex}
%

\input{plotting_latex_code/ssn_fulltrain_vs_cp_stage1_vs_stage2_uq_figures_pgfplots.tex}

\subsection{Results}
\label{subsec:Numerical Study Results}

\subsubsection{Nominal Non-Coherent Sources}
We begin with the nominal setting of non-coherent Gaussian sources and a calibrated array. This experiment serves as a sanity check, since classical subspace methods and their associated uncertainty analyses are expected to be reliable when their modeling assumptions hold.
The results in \figref{fig:non_coherent_snr_benchmark_combined} show that USSN preserves accurate \ac{doa} estimation across the considered \ac{snr} range while also producing reliable uncertainty estimates. In this regime, classical \ac{esprit} provides a meaningful reference, since the source covariance is full rank and the array model is correctly specified. Nevertheless, the proposed uncertainty-aware training improves the calibration of the covariance output compared with SubspaceNet trained only for pointwise \ac{doa} recovery.  

\subsubsection{Robustness to Array Miscalibration}
We next evaluate robustness to array miscalibration by varying the perturbation parameter $\eta$. This experiment targets  \ref{itm:Miscalibration}, where the true array response deviates from the nominal steering model used by classical \ac{esprit}.
As shown in \figref{fig:non_coherent_eta_benchmark_doa}, the performance of classical \ac{esprit} deteriorates as the array perturbation increases, since the subspace estimator relies on an inaccurate steering model. In contrast, USSN remains substantially more robust, as the learned surrogate covariance is optimized to support the downstream \ac{esprit} operation despite the mismatch. The uncertainty metrics show a similar trend: the proposed method provides covariance estimates that remain informative under miscalibration, whereas classical uncertainty extraction becomes unreliable when the assumptions underlying the perturbation analysis are violated.

\subsubsection{Effect of the Number of Sources}
We also evaluate the effect of increasing the number of sources. This setting is particularly relevant here since USSN estimates a full $M\times M$ error covariance matrix whose off-diagonal entries characterize coupling between different \ac{doa} errors.
The results in \figref{fig:non_coherent_m_benchmark_doa} demonstrate that USSN maintains reliable \ac{doa} estimation and uncertainty characterization as the number of sources increases. As expected, the problem becomes more difficult for larger $M$, since the angular separation between sources tends to decrease and the dimension of the covariance output grows. Still, USSN remains competitive in accuracy while providing more reliable covariance estimates than baselines that do not exploit the \ac{esprit} structure.

\subsubsection{Geometry-Dependent Uncertainty}
To further examine whether the proposed covariance estimate reflects the geometry of the source configuration, we consider a two-source setting at $\ac{snr}=0$~dB. One source is fixed at $\theta_1=40^{\circ}$, while the second source is swept across the angular sector, with finer sampling near the fixed source.
The results in \figref{fig:all_models_fixed_anchor_sweep_snr0_accuracy} and \figref{fig:all_models_fixed_anchor_sweep_snr0_variance} show that the estimation difficulty varies significantly with the angular position of the moving source. The error increases near endfire and when the two sources become closely spaced, reflecting the reduced ability to resolve the sources in these configurations.  USSN captures these changes in its predicted uncertainty. This experiment highlights the value of a sample-dependent covariance estimate: the uncertainty is not determined only by global parameters such as \ac{snr}, but also by the specific source geometry and the coupling between the estimated directions.

\subsubsection{Non-Gaussian Sources}
We next examine a non-Gaussian signaling scenario using $16$-QAM source symbols. The motivation from this study stems from the fact that the covariance perturbation moments used by USSN are computed with a Gaussian plug-in approximation, whereas communication signals are  drawn from discrete constellations.
\figref{fig:modulation_16qam_snr_benchmark_combined} indicates that USSN remains effective under non-Gaussian signaling. Although the plug-in perturbation approximation is derived under a local Gaussian assumption, the learned surrogate covariance and uncertainty-aware training enable the method to provide useful uncertainty estimates for communication-like source signals. This suggests that the proposed framework is not limited to ideal Gaussian source models, and can be applied in more practical array processing and wireless communication settings.

\subsubsection{Coherent Sources and Miscalibration}
We now turn to coherent sources, where the source covariance matrix is rank deficient, violating the condition enabling conventional subspace separation.
The  results, reported in \figref{fig:coherent_snr_benchmark_combined},  demonstrate a key advantage of our framework. Classical \ac{esprit} struggles here, and its  uncertainty extraction is no longer reliable because both the estimator and the perturbation characterization rely on assumptions that are violated. USSN, however, uses the learned surrogate covariance to restore a representation suitable for subspace processing, while the uncertainty-aware training encourages this representation to also support covariance reconstruction. As a result, USSN provides both accurate \ac{doa} estimates and meaningful uncertainty measures across the considered \ac{snr} range.

We also consider the combined effect of source coherence and array miscalibration. This represents a  challenging regime, since both the full-rank source covariance assumption and the nominal array-response assumption are violated simultaneously.
The results reported in Fig.~\ref{fig:coherent_eta_benchmark_combined} show that the degradation of classical \ac{esprit} is further amplified when coherence and miscalibration occur together.  USSN remains robust across the perturbation range, demonstrating the benefit of learning a surrogate covariance that is optimized for the downstream \ac{esprit} estimator. The uncertainty estimates produced by USSN also remain informative in this combined-mismatch setting, supporting the claim that our method can provide uncertainty-aware \ac{doa} recovery beyond the standard operating assumptions of  subspace methods.

\subsubsection{Uncertainty Ablation Study}
\label{sssec: Ablation Study}
We conclude with an ablation study comparing our pertubation-based formulation with generic deep-learning uncertainty quantification techniques. We compare USSN with Monte-Carlo dropout, implementing a Bayesian \ac{dnn}, and with conformal prediction following the calibration-based localization approach. The comparison is carried out in the non-coherent setting with $M=2$ sources under an \ac{snr} sweep.
The results in \figref{fig:fulltrain-cp-accuracy-coverage} show that USSN provides a favorable accuracy--reliability tradeoff. Conformal prediction provides calibrated marginal intervals, but these intervals are not sample-dependent full covariance matrices. Similarly, Monte-Carlo dropout reflects variability induced by the learned predictor, but does not directly characterize the perturbation sensitivity of the subspace estimator. In contrast, USSN produces a covariance matrix tied to the surrogate covariance and the retained \ac{esprit} computation, which enables it to capture the dependence of uncertainty on the observed sample and source configuration.

The additional diagnostics in \figref{fig:fulltrain-cp-qhat-nll} further illustrate this distinction. The conformal half-widths reflect calibration of marginal residuals, while the proposed method produces a native full-covariance estimate. Consequently, metrics such as Gaussian negative log-likelihood and normalized \ac{anees} are naturally aligned with the USSN output, whereas they can only be applied to conformal intervals through a surrogate diagonal-Gaussian interpretation. These results support our design principle, that uncertainty is most effective when it is extracted through the preserved model-based processing structure.

\section{Conclusion}
\label{sec: Conclusions}
We introduced an uncertainty extraction method for  SubspaceNet-based \ac{doa} estimation, which leverages its preserved subspace structure to provide  uncertainty quantification. By adapting asymptotic ESPRIT error analysis and identifying suitable approximations based on features provided by SubspaceNet, our method yields faithful uncertainty estimates alongside accurate \ac{doa} recovery. Numerical results confirm that the proposed approach enables accurate estimation of both the \ac{doa} and the error level in various challenging settings. 

\appendices


\section{Proof of Theorem~\ref{thm:full_doa_covariance_overlapping_esprit}}
\label{app:proof_full_doa_covariance}

We prove Theorem~\ref{thm:full_doa_covariance_overlapping_esprit} by propagating the
covariance perturbation through three mappings: the covariance-to-subspace mapping, the
subspace-to-ESPRIT-eigenvalue mapping, and the ESPRIT-eigenvalue-to-\ac{doa} mapping.  
Our methodology is based on characterizing the perturbations between quantities computed by \ac{esprit} applied to the {\em true} input covariance $\myMat{R}_X$ (e.g., $\myMat{F}, \myMat{E}_x, \myMat{E}_y$, $\myVec{\lambda}$, etc.), and the corresponding quantities denoted by \ac{esprit} applied to the estimated covariance $\hat{\myMat{R}}_X$, for which all quantities are denoted with $\hat{(\cdot)}$ (e.g., $\hat{\myMat{F}}, \hat{\myMat{E}}_x, \hat{\myMat{E}}_y$,  $\hat{\myVec{\lambda}}$, etc.). As in \eqref{eqn:covariance_perturbation}, we write the differences between these quantities with $\Delta$'s (e.g., $\Delta{\myMat{F}}, \Delta{\myMat{E}}_x, \Delta{\myMat{E}}_y$, $\Delta\myVec{\lambda}$, etc.).

Following the above rationale, let $ \hat{\myMat{E}}_{\rm S}
    = \myMat{E}_{\rm S}     +     \Delta \myMat{E}_{\rm S}$ be the 
signal subspace recovered from $\hat{\myMat{R}}_X$. 
Under the eigenvalue separation assumption, standard first-order subspace perturbation theory \cite[Ch. 4-5]{stewart1990matrix}
gives a linear relation between the covariance perturbation and the signal-subspace perturbation:
\begin{align}
    \operatorname{vec}
    \left(
    \Delta \myMat{E}_{s}
    \right)
    =
    \myMat{T}_{E}
    \operatorname{vec}
    \left(
    \Delta \myMat{R}
    \right)
    +
    \mathcal{O}
    \left(
    \left\|
    \Delta \myMat{R}
    \right\|^2
    \right),
    \label{eqn:subspace_perturbation_linear_map}
\end{align}
where \(\myMat{T}_{E}\) is the first-order subspace perturbation operator \cite[Part V]{stewart1990matrix}.
Now, recall that \ac{esprit} aims at computing $\myMat{E}_{x}
    =
    \myMat{J}_{1}\myMat{E}_{\rm S}$ and $\myMat{E}_{y}
    =    \myMat{J}_{2}\myMat{E}_{\rm S}$, for which  
\begin{align}
    \myMat{E}_{y}
    =
    \myMat{E}_{x}\myMat{F}.
    \label{eqn:exact_shift_invariance}
\end{align}
Accordingly, when applying \ac{esprit} to $\hat{\myMat{R}}_X$, we have that \eqref{eqn:exact_shift_invariance} gives
   $\hat{\myMat{E}}_y = \myMat{E}_{y}
    +
    \Delta \myMat{E}_{y}
    = 
    \hat{\myMat{E_x}}\hat{\myMat{F}}
    =
    \left(
    \myMat{E}_{x}
    +
    \Delta \myMat{E}_{x}
    \right)
    \left(
    \myMat{F}
    +
    \Delta \myMat{F}
    \right)$.
Analyzing first-order error while keeping  second-order terms with $\mathcal{O}$ notations, results in 
\begin{align}
    \Delta \myMat{E}_{y}
    =
    \Delta \myMat{E}_{x}\myMat{F}
    +
    \myMat{E}_{x}\Delta \myMat{F} 
    +
    \mathcal{O}
    \left(
    \left\|
    \Delta \myMat{R}
    \right\|^2
    \right),
    \label{eqn:first_order_shift_invariance}
\end{align}
Since $ \Delta \myMat{E}_{x}  =  \myMat{J}_{1}\Delta \myMat{E}_{\rm S}$, and $\Delta \myMat{E}_{y}   =
    \myMat{J}_{2}\Delta \myMat{E}_{\rm S}$,
 \eqref{eqn:first_order_shift_invariance} implies that
\begin{align}
    \myMat{J}_{2}\Delta \myMat{E}_{\rm S}
    =
    \myMat{J}_{1}\Delta \myMat{E}_{\rm S}\myMat{F}
    +
    \myMat{E}_{x}\Delta \myMat{F} +
    \mathcal{O}
    \left(
    \left\|
    \Delta \myMat{R}
    \right\|^2
    \right).
\end{align}
Left-multiplying by \(\myMat{E}_{x}^{\dagger}\) yields 
\begin{align}
    \Delta \myMat{F}
    =
    \myMat{E}_{x}^{\dagger}
    \left(
    \myMat{J}_{2}\Delta \myMat{E}_{s}
    -
    \myMat{J}_{1}\Delta \myMat{E}_{s}\myMat{F}
    \right)+
    \mathcal{O}
    \left(
    \left\|
    \Delta \myMat{R}
    \right\|^2
    \right).
    \label{eqn:psi_perturbation_first_form}
\end{align}
For an eigenvalue \(\lambda_i\) of order one,
since $\hat{\myVec{v}}_i$ is right eigenvector of $\hat{\myMat{F}}$,  we obtain the effect of eigenvalue perturbation as
\begin{flalign}
\hat{\myMat{F}}\hat{\myVec{v}}_i &= (\myMat{F}+ \Delta{\myMat{F}})(\myVec{v}_i+\Delta{\myVec{v}}_i) 
\notag \\&=
(\lambda_i + \Delta{\lambda}_i)(\myVec{v}_i + \Delta{\myVec{v}}_i)
=  \hat{\lambda}_i\hat{\myVec{v}}_i.
\label{eqn:eigenvalue_perturbation_appendix_full}
\end{flalign}
Then left multiplication by $\myVec{q}_i^{\mathsf{H}}$ and requiring normalization $\myVec{q}_i^{\mathsf{H}}\myVec{v}_i = 1$ yields the first-order eigenvalue perturbation 
\begin{align}
    \Delta \lambda_i
    =
    \myVec{q}_{i}^{H}
    \Delta \myMat{F}
    \myVec{v}_{i} +
    \mathcal{O}
    \left(
    \left\|
    \Delta \myMat{R}
    \right\|^2
    \right).
\label{eqn:eigenvalue_perturbation_appendix}
\end{align}
Substituting \eqref{eqn:psi_perturbation_first_form} into
\eqref{eqn:eigenvalue_perturbation_appendix} gives
\begin{align*}
    \Delta \lambda_i
    &=
    \myVec{q}_{i}^{\mathsf{H}}
    \myMat{E}_{x}^{\dagger}
    \left(
    \myMat{J}_{2}\Delta \myMat{E}_{\rm S}
    -
    \myMat{J}_{1}\Delta \myMat{E}_{\rm S}\myMat{F}
    \right)
    \myVec{v}_{i} +
    \mathcal{O}
    \left(
    \left\|
    \Delta \myMat{R}
    \right\|^2
    \right).
\end{align*}
Since $ \myMat{F}\myVec{v}_{i}   = \lambda_i\myVec{v}_{i}$,
we obtain
\begin{align}
    \Delta \lambda_i
    &=
    \myVec{q}_{i}^{\mathsf{H}}
    \myMat{E}_{x}^{\dagger}
    \left(
    \myMat{J}_{2}
    -
    \lambda_i\myMat{J}_{1}
    \right)   \Delta \myMat{E}_{\rm S}
    \myVec{v}_{i}+
    \mathcal{O}
    \left(
    \left\|
    \Delta \myMat{R}
    \right\|^2
    \right).
    \label{eqn:delta_lambda_overlapping_appendix}
\end{align}
Equation~\eqref{eqn:delta_lambda_overlapping_appendix} shows that each ESPRIT eigenvalue
perturbation is a linear function of the signal-subspace perturbation $ \Delta \myMat{E}_{\rm S}$. Combining
\eqref{eqn:subspace_perturbation_linear_map} and
\eqref{eqn:delta_lambda_overlapping_appendix} for all \(i=1,\ldots,M\), we may write
\begin{align}
    \Delta \myVec{\lambda}
    =
    \myMat{T}_{\lambda}
    \operatorname{vec}
    \left(
    \Delta \myMat{R}
    \right)
    +
    \mathcal{O}
    \left(
    \left\|
    \Delta \myMat{R}
    \right\|^2
    \right),
    \label{eqn:eigenvalue_perturbation_linear_map}
\end{align}
where \(\myMat{T}_{\lambda}\) is the resulting first-order covariance-to-eigenvalue
perturbation operator.

Next, we propagate the eigenvalue perturbation to the \ac{doa} perturbation. The ideal ESPRIT eigenvalue associated with the $i$th source is $\lambda_i =
\exp\left(
-j\pi\sin\theta_i
\right)$. 
Accordingly, for applying \ac{esprit} to $\hat{\myMat{R}}_X$, we obtain the eigenvalues $\{\hat{\lambda}_i\}$ such that 
\begin{equation}
    \Delta \lambda_i = \hat{\lambda}_i -  \exp
    \left(
    -j\pi\sin\theta_i
    \right), \qquad i \in \{1,\ldots,M\},
    \label{eqn:Eigen_pertubation}
\end{equation}
whose stacking is $\Delta\myVec{ \lambda} = [\Delta \lambda_1,\ldots \Delta \lambda_M]^{\mathsf{T}}$. 
Recall that the first-order Taylor series expansion of the complex phase function $\angle(\cdot)$ around $z_0\in \mathbb{C}$ is given by $\angle(z) \approx \angle(z_0) +\Im\left\{\frac{z-z_0}{z_0} \right\}$, where $\Im\{\cdot\}$ is the imaginary part of a complex number~\cite{carrier2005functions}. Accordingly, 
for a small eigenvalue perturbation, the first-order Taylor series expansion of $\angle(\cdot)$ around $\lambda_i$ yields
\begin{align}
    \Delta \angle \lambda_i &\triangleq \angle\left(\lambda_i + \Delta \lambda_i\right) - \angle\lambda_i \notag \\
    &= \angle\frac{\hat{\lambda_i}} {\lambda_i}= \angle{\frac{\lambda_i + \Delta\lambda_i}{\lambda_i}} = \angle(1 + \frac{\Delta \lambda_i}{\lambda_i})
    \notag \\&=
    \Im
    \left\{
    \frac{\Delta \lambda_i}{\exp
    \left(
    -j\pi\sin\theta_i
    \right)}
    \right\}
    +
    \mathcal{O}
    \left(
    \left\|
    \Delta \myMat{R}
    \right\|^2
    \right).
\end{align}
since $\theta_i = \sin^{-1}\!\big(- \tfrac{\angle\,{\lambda}_i}{\pi} \big)$, we have first-order Taylor expansion
\begin{flalign}
    \Delta \theta_i &= \frac{d\theta_i}{d(\angle\lambda_i)}\Delta\angle{\lambda_i} +
    \mathcal{O}
    \left(
    \left\|
    \Delta \myMat{R}
    \right\|^2
    \right)
    \notag \\ 
    &= \frac{d\theta_i}{d(\sin\theta_i)}\frac{d(\sin\theta_i)}{d(\angle\lambda_i)}\Delta\angle{\lambda_i}+
    \mathcal{O}
    \left(
    \left\|
    \Delta \myMat{R}
    \right\|^2
    \right)
    \notag \\ &=
    \frac{-1}{\pi\cos\theta_i}
    \Im
    \left\{
    \frac{\Delta \lambda_i}{\exp
    \left(
    -j\pi\sin\theta_i
    \right)}
    \right\}
    \!+\!
    \mathcal{O}
    \left(
    \left\|
    \Delta \myMat{R}
    \right\|^2
    \right).
    \label{eqn:theta_perturbation_appendix}
\end{flalign}

Using the matrix $ \myMat{\Lambda}_{\myVec{\theta}}$ in \eqref{eqn:lambdaDef}, we define the vector $ \Delta \myVec{z} \triangleq 
    \myMat{\Lambda}_{\myVec{\theta}}^{-1}
    \Delta \myVec{\lambda}$ (whose $i$th entry is $\frac{\Delta \lambda_i}{\exp
    \left(
    -j\pi\sin\theta_i
    \right)}$).
Then, since there are $M$ sources, we get a unified equation using the matrix $\myMat{D}_{\myVec{\theta}}$ defined in \eqref{eqn:DthetaDef}, i.e., \eqref{eqn:theta_perturbation_appendix} can be written compactly as
\begin{align}
    \Delta \myVec{\theta}
    =
    -
    \myMat{D}_{\theta}
    \Im
    \left\{
    \Delta \myVec{z}
    \right\}
    +
    \mathcal{O}
    \left(
    \left\|
    \Delta \myMat{R}
    \right\|^2
    \right).
    \label{eqn:theta_vector_perturbation_appendix}
\end{align}
Accounting for the higher-order error term originating from \eqref{eqn:eigenvalue_perturbation_linear_map}, the vector expansion yields:
\begin{align}
    \Delta \myVec{z} 
    &= \myMat{\Lambda}_{\myVec{\theta}}^{-1} \myMat{T}_{\lambda} \operatorname{vec} \left( \Delta \myMat{R} \right) 
   +
    \mathcal{O}
    \left(
    \left\|
    \Delta \myMat{R}
    \right\|^2
    \right).
    \label{eqn:DeltaZ}
\end{align}

Next,  define the eigenvalue perturbation covariance matrices 
\begin{equation}
    \myMat{K}_{\lambda}
    =
    \mathbb{E}
    \left[
    \Delta \myVec{\lambda}
    \Delta \myVec{\lambda}^{\mathsf{H}}
    \right], \quad \widetilde{\myMat{K}}_{\lambda}
    =
    \mathbb{E}
    \left[
    \Delta \myVec{\lambda}
    \Delta \myVec{\lambda}^{\mathsf{T}}
    \right].
    \label{eqn:Kmats}
\end{equation}
It is noted that when all the eigenvalues of $\myMat{F}$ are of order one, then the covariance matrices in \eqref{eqn:Kmats} are given by \eqref{eq:Klambda_pairwise_full}, see \cite[Sec. III.A]{yuen2002asymptotic}.
The covariance and pseudo-covariance of the principal first-order vector term \(\Delta \myVec{z}\) are
\begin{subequations}
    \label{eqn:z_covariance_app}
\begin{align}
    \myMat{K}_{z}
    &=
    \mathbb{E}
    \left[
    \Delta \myVec{z}
    \Delta \myVec{z}^{\mathsf{H}}
    \right]
    =
    \myMat{\Lambda}_{\myVec{\theta}}^{-1}
    \myMat{K}_{\lambda}
    \myMat{\Lambda}_{\myVec{\theta}}^{-\mathsf{H}},
    \label{eqn:z_covariance_appendix}
    \\
    \widetilde{\myMat{K}}_{z}
    &=
    \mathbb{E}
    \left[
    \Delta \myVec{z}
    \Delta \myVec{z}^{\mathsf{T}}
    \right]
    =
    \myMat{\Lambda}_{\myVec{\theta}}^{-1}
    \widetilde{\myMat{K}}_{\lambda}
    \myMat{\Lambda}_{\myVec{\theta}}^{-\mathsf{T}}.
    \label{eqn:z_pseudocovariance_appendix}
\end{align}
\end{subequations}
For a complex  \(\Delta \myVec{z}\), the covariance of its imaginary part is
\begin{align}
    \mathbb{E}
    \left[    \Im\{\Delta \myVec{z}\}
    \Im\{\Delta \myVec{z}\}^{\mathsf{T}}
    \right]
    =
    \frac{1}{2}
    \Re
    \left\{
    \myMat{K}_{z}
    -
    \widetilde{\myMat{K}}_{z}
    \right\}.
    \label{eqn:imaginary_covariance_identity}
\end{align}
Expanding the outer product of the total \ac{doa} error vector from \eqref{eqn:theta_vector_perturbation_appendix}, the cross-multiplication of the deterministic bounds scales the approximation error quadratically 
\begin{align}
    \Delta \myVec{\theta} \Delta \myVec{\theta}^{\mathsf{T}} 
    &= \left( -\myMat{D}_{\myVec{\theta}} \Im \left\{ \Delta \myVec{z} \right\} +
    \mathcal{O}
    \left(
    \left\|
    \Delta \myMat{R}
    \right\|^2
    \right) \right) \nonumber \\
    &\quad  \left( -\myMat{D}_{\myVec{\theta}} \Im \left\{ \Delta \myVec{z} \right\} + \mathcal{O}
    \left(
    \left\|
    \Delta \myMat{R}
    \right\|^2
    \right)\right)^{\mathsf{T}} \nonumber \\
    &= \myMat{D}_{\myVec{\theta}} \Im\left\{\Delta \myVec{z}\right\} \Im\left\{\Delta \myVec{z}\right\}^{T} \myMat{D}_{\myVec{\theta}}  + \mathcal{O}\left(\left\|\Delta \myMat{R}\right\|^3\right).
\end{align}
%
Applying stochastic expectation to both sides yields
\begin{align}
   \myMat{\Sigma} &=\mathbb{E}
    \left[
    \Delta \myVec{\theta}
    \Delta \myVec{\theta}^{\mathsf{T}} 
    \right] \nonumber \\
    &=
    \myMat{D}_{\myVec{\theta}}
    \mathbb{E}
    \left[
    \Im\{\Delta \myVec{z}\}
    \Im\{\Delta \myVec{z}\}^{\mathsf{T}} 
    \right]
    \myMat{D}_{\myVec{\theta}}
    +
    \mathcal{O}
    \left(
    \mathbb{E}[\left\|
    \Delta \myMat{R}
    \right\|^{3}]
    \right) \notag\\
    &\stackrel{(a)}{=}
    \frac{1}{2}
    \myMat{D}_{\myVec{\theta}}
    \Re
    \left\{
    \myMat{K}_{z}
    -
    \widetilde{\myMat{K}}_{z}
    \right\}
    \myMat{D}_{\myVec{\theta}}
    +
    \mathcal{O}
    \left(
    \mathbb{E}[\left\|
    \Delta \myMat{R}
    \right\|^{3}]
    \right) \notag \\
    &\stackrel{(b)}{=}
    \frac{1}{2}
    \myMat{D}_{\myVec{\theta}}
    \Re
    \Big\{ \myMat{C}_{{\myVec{\theta}}}
    \Big\}
    \myMat{D}_{\myVec{\theta}}
    +
    \mathcal{O}
    \left(
    \mathbb{E}[\left\|
    \Delta \myMat{R}
    \right\|^{3}]
    \right).
\end{align}
where $(a)$ follows from \eqref{eqn:imaginary_covariance_identity}, and $(b)$ stems from \eqref{eqn:z_covariance_app} combined with the definition of $\myMat{C}_{{\myVec{\theta}}}$ in \eqref{eqn:propagated_eigenvalue_covariance}.  
%
This coincides with \eqref{eqn:full_doa_covariance_overlapping_esprit}, which completes the proof.
\hfill\(\blacksquare\)

\section{Proof of Corollary~\ref{cor:nonoverlap_variance_special_case}}
\label{app:nonoverlap_variance_special_case}
 The corollary follows from Theorem~\ref{thm:full_doa_covariance_overlapping_esprit} by Taking the \(i\)-th diagonal entry gives
\begin{align}
    \left[
    \myMat{\Sigma}
    \right]_{i,i}
    \!=\!
    \frac{1}{2\pi^{2}\cos^{2}\theta_i}
    \left(
    \left[
    \myMat{K}_{\lambda}
    \right]_{i,i}
   \! \!-
     \Re
    \left\{
    \left[
    \widetilde{\myMat{K}}_{\lambda}
    \right]_{i,i}
    \left(\lambda_i^{*}\right)^2
    \right\}
    \right),
    \label{eqn:variance_from_full_covariance}
\end{align}
where we used \(|\lambda_i|=1\), and therefore
\(\lambda_i^{-2}=(\lambda_i^{*})^2\).
The eigenvalue perturbation covariance
terms are given by \eqref{eq:Klambda_pairwise_full}.  
Substituting \eqref{eq:Klambda_pairwise_full} into
\eqref{eqn:variance_from_full_covariance} yields  
\begin{align}
\mathrm{var}(\hat{\theta}_i)
&=\! \frac{1}{2\pi^{2}\cos^{2}\theta_i}
 \Bigg(
 \myVec{q}_i^{\mathsf{H}}
 \myMat{E}_x^{\dagger}
 \myMat{W}_i
 \SigH_i
 \myMat{W}_i^{\mathsf{H}}
 \left(
 \myMat{E}_x^{\dagger}
 \right)^{\mathsf{H}}
 \myVec{q}_i \notag \\
 & \qquad -\! \Re\!
 \left\{
 \myVec{q}_i^{\mathsf{H}}
 \myMat{E}_x^{\dagger}
 \myMat{U}_i
 \SigT_i
 \myMat{U}_i^{\mathsf{T}}
 \left(
 \myMat{E}_x^{\dagger}
 \right)^{\mathsf{T}}
 \myVec{q}_i
 \left(\lambda_i^{*}\right)^2
 \right\}
 \Bigg) \notag \\
 &+ \!\mathcal{O} \left(\mathbb{E}\left[\|\Delta \myMat{R}\|^2\right]\right).
\label{eq:DOA_var_half}
\end{align} 
Hence, the classical variance-only non-overlapping ESPRIT uncertainty expression is
obtained as a special case of the proposed full covariance characterization by retaining
only the diagonal entry \(\left[\myMat{\Sigma}\right]_{i,i}\).
\hfill\(\blacksquare\)

\bibliographystyle{IEEEtran}
\bibliography{IEEEabrv,mybib}

\end{document}

%% file: plotting_latex_code/benchmark_plot_common.tex
\usepackage{cuted}

\usepgfplotslibrary{groupplots}

\definecolor{cESPRIT}{RGB}{31,119,180}
\definecolor{cCCRB}{RGB}{0,0,0}
\definecolor{cSSNStageOne}{RGB}{214,39,40}
\definecolor{cSSNStageTwo}{RGB}{44,160,44}
\definecolor{cTransMUSIC}{RGB}{255,127,14}
\definecolor{cDNNComplex}{RGB}{148,103,189}
\definecolor{cDataDriven}{RGB}{31,119,180}
\definecolor{cMCDropout}{RGB}{214,39,40}
\definecolor{cCP}{RGB}{255,127,14}
\definecolor{cUQSSNStageTwo}{RGB}{44,160,44}

\pgfplotsset{
    benchmarkSweepAxis/.style={
        width=0.94\linewidth,
        height=0.405\linewidth,
        grid=both,
        major grid style={gray!35},
        minor grid style={gray!15},
        tick label style={font=\scriptsize},
        label style={font=\small},
        title style={font=\small, yshift=1.2ex},
        legend style={
            font=\fontsize{4.4}{5.2}\selectfont,
            at={(0.5,1.12)},
            anchor=south,
            legend columns=3,
            /tikz/every even column/.append style={column sep=0.35ex},
            draw=black,
            fill=white,
            inner xsep=1.2pt,
            inner ysep=1.0pt,
            cells={anchor=west},
            row sep=0.1ex,
            text=black,
            every node/.append style={text=black}
        },
        mark size=1.45pt,
        line width=0.72pt,
        unbounded coords=jump,
    }
}

\pgfplotsset{
    benchmarkSubfigureAxis/.style={
        benchmarkSweepAxis,
        width=0.98\linewidth,
        height=0.72\linewidth,
        tick label style={font=\scriptsize},
        label style={font=\footnotesize},
        title style={font=\footnotesize\bfseries},
        mark size=1.35pt,
        line width=0.68pt,
    }
}

\pgfplotsset{
    benchmarkThreePanelAxis/.style={
        benchmarkSweepAxis,
        width=0.310\textwidth,
        height=0.250\textwidth,
        tick label style={font=\scriptsize},
        label style={font=\scriptsize},
        title style={font=\small, yshift=0.15ex},
        legend style={
            font=\fontsize{5.1}{6.0}\selectfont,
            legend columns=-1,
            /tikz/every even column/.append style={column sep=0.35ex},
            draw=black,
            fill=white,
            inner xsep=1.2pt,
            inner ysep=0.8pt,
            cells={anchor=west},
            row sep=0ex,
            text=black,
            every node/.append style={text=black}
        },
        mark size=1.35pt,
        line width=0.68pt,
    }
}

%% file: plotting_latex_code/angle_sweep_setup_snr0.tex
\pgfplotsset{compat=1.18}
\providecolor{cSSNStageTwo}{HTML}{D55E00}
\providecolor{cTransMUSIC}{HTML}{009E73}
\providecolor{cDNNComplex}{HTML}{8B61A8}
\providecolor{cESPRIT}{HTML}{0072B2}
\pgfplotsset{angleSweepZeroAxis/.style={width=\linewidth,height=5cm,grid=major,unbounded coords=jump,tick label style={font=\tiny},label style={font=\scriptsize},scaled ticks=false}}

%% file: plotting_latex_code/non_coherent_snr_sweep_figures_pgfplots.tex

\begin{figure*}
\centering
\scriptsize
{\fontsize{5.2}{6.0}\selectfont
\makebox[\textwidth][c]{%
\textcolor{cCCRB}{\rule{0.9em}{0.7pt}}\,CCRB / ideal reference\hspace{0.70em}%
\textcolor{cESPRIT}{\rule{0.9em}{0.9pt}}\,ESPRIT\hspace{0.70em}%
\textcolor{cSSNStageOne}{\rule{0.9em}{0.9pt}}\,SubspaceNet Stage~1\hspace{0.70em}%
\textcolor{cSSNStageTwo}{\rule{0.9em}{0.9pt}}\,USSN\hspace{0.70em}%
\textcolor{cTransMUSIC}{\rule{0.9em}{0.9pt}}\,TransMUSIC\hspace{0.70em}%
\textcolor{cDNNComplex}{\rule{0.9em}{0.9pt}}\,Data-driven complex%
}
\par}

\vspace{0.25em}
\begin{subfigure}[t]{0.32\textwidth}
\centering
\begin{tikzpicture}
\begin{axis}[benchmarkSubfigureAxis,
xlabel={SNR [dB]},
    xmin=-11.2,
    xmax=21.2,
    xtick={-10,-3,0,3,10,20},
    ylabel={DoA RMSPE [deg]},
    ymode=log,]
\addplot+[forget plot, mark=none, dashed, very thick, color=cCCRB]
    table[x=SNRdB, y=CCRBSigmaDeg, col sep=comma] {numerical_results/non_coherent_snr_sweep_benchmark_pgfplots.csv};

\addplot+[forget plot, mark=*, color=cESPRIT]
    table[x=SNRdB, y=ESPRITDOADeg, col sep=comma] {numerical_results/non_coherent_snr_sweep_benchmark_pgfplots.csv};

\addplot+[forget plot, mark=square*, color=cSSNStageOne]
    table[x=SNRdB, y=SSNStage1DOADeg, col sep=comma] {numerical_results/non_coherent_snr_sweep_benchmark_pgfplots.csv};

\addplot+[forget plot, mark=diamond*, color=cSSNStageTwo]
    table[x=SNRdB, y=SSNStage2DOADeg, col sep=comma] {numerical_results/non_coherent_snr_sweep_benchmark_pgfplots.csv};

\addplot+[forget plot, mark=triangle*, color=cTransMUSIC]
    table[x=SNRdB, y=TransMUSICStage2DOADeg, col sep=comma] {numerical_results/non_coherent_snr_sweep_benchmark_pgfplots.csv};

\addplot+[forget plot, mark=o, color=cDNNComplex]
    table[x=SNRdB, y=DNNComplexStage2DOADeg, col sep=comma] {numerical_results/non_coherent_snr_sweep_benchmark_pgfplots.csv};
\end{axis}
\end{tikzpicture}
\caption{DoA RMSPE.}
\label{fig:non_coherent_snr_benchmark_doa}
\end{subfigure}\hfill
\begin{subfigure}[t]{0.32\textwidth}
\centering
\begin{tikzpicture}
\begin{axis}[benchmarkSubfigureAxis,
xlabel={SNR [dB]},
    xmin=-11.2,
    xmax=21.2,
    xtick={-10,-3,0,3,10,20},
    ylabel={Uncertainty loss},
    ymode=log,]
\addplot+[forget plot, mark=none, dashed, very thick, color=cCCRB]
    table[x=SNRdB, y=CCRBRefUELoss, col sep=comma] {numerical_results/non_coherent_snr_sweep_benchmark_pgfplots.csv};

\addplot+[forget plot, mark=*, color=cESPRIT]
    table[x=SNRdB, y=ESPRITNetUELoss, col sep=comma] {numerical_results/non_coherent_snr_sweep_benchmark_pgfplots.csv};

\addplot+[forget plot, mark=square*, color=cSSNStageOne]
    table[x=SNRdB, y=SSNStage1NetUELoss, col sep=comma] {numerical_results/non_coherent_snr_sweep_benchmark_pgfplots.csv};

\addplot+[forget plot, mark=diamond*, color=cSSNStageTwo]
    table[x=SNRdB, y=SSNStage2NetUELoss, col sep=comma] {numerical_results/non_coherent_snr_sweep_benchmark_pgfplots.csv};

\addplot+[forget plot, mark=triangle*, color=cTransMUSIC]
    table[x=SNRdB, y=TransMUSICStage2NetUELoss, col sep=comma] {numerical_results/non_coherent_snr_sweep_benchmark_pgfplots.csv};

\addplot+[forget plot, mark=o, color=cDNNComplex]
    table[x=SNRdB, y=DNNComplexStage2NetUELoss, col sep=comma] {numerical_results/non_coherent_snr_sweep_benchmark_pgfplots.csv};
\end{axis}
\end{tikzpicture}
\caption{Uncertainty loss.}
\label{fig:non_coherent_snr_benchmark_uncertainty}
\end{subfigure}\hfill
\begin{subfigure}[t]{0.32\textwidth}
\centering
\begin{tikzpicture}
\begin{axis}[benchmarkSubfigureAxis,
xlabel={SNR [dB]},
    xmin=-11.2,
    xmax=21.2,
    xtick={-10,-3,0,3,10,20},
    ylabel={Log norm. ANEES},
    ymajorgrids=true,]
\addplot+[forget plot, mark=none, dashed, very thick, color=cCCRB] coordinates {(-11.2,0) (21.2,0)};

\addplot+[forget plot, mark=*, color=cESPRIT]
    table[x=SNRdB, y=ESPRITLogANEESNormalized, col sep=comma] {numerical_results/non_coherent_snr_sweep_benchmark_pgfplots.csv};

\addplot+[forget plot, mark=square*, color=cSSNStageOne]
    table[x=SNRdB, y=SSNStage1LogANEESNormalized, col sep=comma] {numerical_results/non_coherent_snr_sweep_benchmark_pgfplots.csv};

\addplot+[forget plot, mark=diamond*, color=cSSNStageTwo]
    table[x=SNRdB, y=SSNStage2LogANEESNormalized, col sep=comma] {numerical_results/non_coherent_snr_sweep_benchmark_pgfplots.csv};

\addplot+[forget plot, mark=triangle*, color=cTransMUSIC]
    table[x=SNRdB, y=TransMUSICStage2LogANEESNormalized, col sep=comma] {numerical_results/non_coherent_snr_sweep_benchmark_pgfplots.csv};

\addplot+[forget plot, mark=o, color=cDNNComplex]
    table[x=SNRdB, y=DNNComplexStage2LogANEESNormalized, col sep=comma] {numerical_results/non_coherent_snr_sweep_benchmark_pgfplots.csv};
\end{axis}
\end{tikzpicture}
\caption{Log normalized ANEES.}
\end{subfigure}

\caption{Non-coherent sources, performance measures versus \ac{snr}}
\label{fig:non_coherent_snr_benchmark_combined}
\label{fig:non_coherent_snr_benchmark_log_anees}

\end{figure*}

%% file: plotting_latex_code/non_coherent_eta_sweep_figures_pgfplots.tex

\begin{figure*}[!t]
\centering
\scriptsize
{\fontsize{5.2}{6.0}\selectfont
\makebox[\textwidth][c]{%
\textcolor{cCCRB}{\rule{0.9em}{0.7pt}}\,CCRB / ideal reference\hspace{0.70em}%
\textcolor{cESPRIT}{\rule{0.9em}{0.9pt}}\,ESPRIT\hspace{0.70em}%
\textcolor{cSSNStageOne}{\rule{0.9em}{0.9pt}}\,SubspaceNet Stage~1\hspace{0.70em}%
\textcolor{cSSNStageTwo}{\rule{0.9em}{0.9pt}}\,USSN\hspace{0.70em}%
\textcolor{cTransMUSIC}{\rule{0.9em}{0.9pt}}\,TransMUSIC\hspace{0.70em}%
\textcolor{cDNNComplex}{\rule{0.9em}{0.9pt}}\,Data-driven complex%
}
\par}

\vspace{0.25em}
\begin{subfigure}[t]{0.32\textwidth}
\centering
\begin{tikzpicture}
\begin{axis}[benchmarkSubfigureAxis,
xlabel={Array perturbation $\eta$},
    xmin=-0.0012,
    xmax=0.0312,
    xtick={0,0.005,0.01,0.015,0.02,0.025,0.03},
    ylabel={DoA RMSPE [deg]},
    ymode=log,]
\addplot+[forget plot, mark=none, dashed, very thick, color=cCCRB]
    table[x=Eta, y=CCRBSigmaDeg, col sep=comma] {numerical_results/non_coherent_eta_sweep_benchmark_pgfplots.csv};

\addplot+[forget plot, mark=*, color=cESPRIT]
    table[x=Eta, y=ESPRITDOADeg, col sep=comma] {numerical_results/non_coherent_eta_sweep_benchmark_pgfplots.csv};

\addplot+[forget plot, mark=square*, color=cSSNStageOne]
    table[x=Eta, y=SSNStage1DOADeg, col sep=comma] {numerical_results/non_coherent_eta_sweep_benchmark_pgfplots.csv};

\addplot+[forget plot, mark=diamond*, color=cSSNStageTwo]
    table[x=Eta, y=SSNStage2DOADeg, col sep=comma] {numerical_results/non_coherent_eta_sweep_benchmark_pgfplots.csv};

\addplot+[forget plot, mark=triangle*, color=cTransMUSIC]
    table[x=Eta, y=TransMUSICStage2DOADeg, col sep=comma] {numerical_results/non_coherent_eta_sweep_benchmark_pgfplots.csv};

\addplot+[forget plot, mark=o, color=cDNNComplex]
    table[x=Eta, y=DNNComplexStage2DOADeg, col sep=comma] {numerical_results/non_coherent_eta_sweep_benchmark_pgfplots.csv};
\end{axis}
\end{tikzpicture}
\caption{DoA RMSPE.}
\end{subfigure}\hfill
\begin{subfigure}[t]{0.32\textwidth}
\centering
\begin{tikzpicture}
\begin{axis}[benchmarkSubfigureAxis,
xlabel={Array perturbation $\eta$},
    xmin=-0.0012,
    xmax=0.0312,
    xtick={0,0.005,0.01,0.015,0.02,0.025,0.03},
    ylabel={Uncertainty loss},
    ymode=log,]
\addplot+[forget plot, mark=none, dashed, very thick, color=cCCRB]
    table[x=Eta, y=CCRBRefUELoss, col sep=comma] {numerical_results/non_coherent_eta_sweep_benchmark_pgfplots.csv};

\addplot+[forget plot, mark=*, color=cESPRIT]
    table[x=Eta, y=ESPRITNetUELoss, col sep=comma] {numerical_results/non_coherent_eta_sweep_benchmark_pgfplots.csv};

\addplot+[forget plot, mark=square*, color=cSSNStageOne]
    table[x=Eta, y=SSNStage1NetUELoss, col sep=comma] {numerical_results/non_coherent_eta_sweep_benchmark_pgfplots.csv};

\addplot+[forget plot, mark=diamond*, color=cSSNStageTwo]
    table[x=Eta, y=SSNStage2NetUELoss, col sep=comma] {numerical_results/non_coherent_eta_sweep_benchmark_pgfplots.csv};

\addplot+[forget plot, mark=triangle*, color=cTransMUSIC]
    table[x=Eta, y=TransMUSICStage2NetUELoss, col sep=comma] {numerical_results/non_coherent_eta_sweep_benchmark_pgfplots.csv};

\addplot+[forget plot, mark=o, color=cDNNComplex]
    table[x=Eta, y=DNNComplexStage2NetUELoss, col sep=comma] {numerical_results/non_coherent_eta_sweep_benchmark_pgfplots.csv};
\end{axis}
\end{tikzpicture}
\caption{Uncertainty loss.}
\end{subfigure}\hfill
\begin{subfigure}[t]{0.32\textwidth}
\centering
\begin{tikzpicture}
\begin{axis}[benchmarkSubfigureAxis,
xlabel={Array perturbation $\eta$},
    xmin=-0.0012,
    xmax=0.0312,
    xtick={0,0.005,0.01,0.015,0.02,0.025,0.03},
    ylabel={Log norm. ANEES},
    ymajorgrids=true,]
\addplot+[forget plot, mark=none, dashed, very thick, color=cCCRB] coordinates {(-0.0012,0) (0.0312,0)};

\addplot+[forget plot, mark=*, color=cESPRIT]
    table[x=Eta, y=ESPRITLogANEESNormalized, col sep=comma] {numerical_results/non_coherent_eta_sweep_benchmark_pgfplots.csv};

\addplot+[forget plot, mark=square*, color=cSSNStageOne]
    table[x=Eta, y=SSNStage1LogANEESNormalized, col sep=comma] {numerical_results/non_coherent_eta_sweep_benchmark_pgfplots.csv};

\addplot+[forget plot, mark=diamond*, color=cSSNStageTwo]
    table[x=Eta, y=SSNStage2LogANEESNormalized, col sep=comma] {numerical_results/non_coherent_eta_sweep_benchmark_pgfplots.csv};

\addplot+[forget plot, mark=triangle*, color=cTransMUSIC]
    table[x=Eta, y=TransMUSICStage2LogANEESNormalized, col sep=comma] {numerical_results/non_coherent_eta_sweep_benchmark_pgfplots.csv};

\addplot+[forget plot, mark=o, color=cDNNComplex]
    table[x=Eta, y=DNNComplexStage2LogANEESNormalized, col sep=comma] {numerical_results/non_coherent_eta_sweep_benchmark_pgfplots.csv};
\end{axis}
\end{tikzpicture}
\caption{Log normalized ANEES.}
\end{subfigure}

\caption{Non-coherent sources, performance measures versus  array-perturbation ($\eta$)}
\label{fig:non_coherent_eta_benchmark_combined}
\label{fig:non_coherent_eta_benchmark_doa}
\label{fig:non_coherent_eta_benchmark_uncertainty}
\label{fig:non_coherent_eta_benchmark_log_anees}

\end{figure*}

%% file: plotting_latex_code/non_coherent_M_sweep_figures_pgfplots.tex

\begin{figure*}[!t]
\centering
\scriptsize
{\fontsize{5.2}{6.0}\selectfont
\makebox[\textwidth][c]{%
\textcolor{cCCRB}{\rule{0.9em}{0.7pt}}\,CCRB / ideal reference\hspace{0.70em}%
\textcolor{cESPRIT}{\rule{0.9em}{0.9pt}}\,ESPRIT\hspace{0.70em}%
\textcolor{cSSNStageOne}{\rule{0.9em}{0.9pt}}\,SubspaceNet Stage~1\hspace{0.70em}%
\textcolor{cSSNStageTwo}{\rule{0.9em}{0.9pt}}\,USSN\hspace{0.70em}%
\textcolor{cTransMUSIC}{\rule{0.9em}{0.9pt}}\,TransMUSIC\hspace{0.70em}%
\textcolor{cDNNComplex}{\rule{0.9em}{0.9pt}}\,Data-driven complex%
}
\par}

\vspace{0.25em}
\begin{subfigure}[t]{0.32\textwidth}
\centering
\begin{tikzpicture}
\begin{axis}[benchmarkSubfigureAxis,
xlabel={Number of sources $M$},
    xmin=1.88,
    xmax=5.12,
    xtick={2,3,4,5},
    ylabel={DoA RMSPE [deg]},
    ymode=log,]
\addplot+[forget plot, mark=none, dashed, very thick, color=cCCRB]
    table[x=SourcesM, y=CCRBSigmaDeg, col sep=comma] {numerical_results/non_coherent_M_sweep_benchmark_pgfplots.csv};

\addplot+[forget plot, mark=*, color=cESPRIT]
    table[x=SourcesM, y=ESPRITDOADeg, col sep=comma] {numerical_results/non_coherent_M_sweep_benchmark_pgfplots.csv};

\addplot+[forget plot, mark=square*, color=cSSNStageOne]
    table[x=SourcesM, y=SSNStage1DOADeg, col sep=comma] {numerical_results/non_coherent_M_sweep_benchmark_pgfplots.csv};

\addplot+[forget plot, mark=diamond*, color=cSSNStageTwo]
    table[x=SourcesM, y=SSNStage2DOADeg, col sep=comma] {numerical_results/non_coherent_M_sweep_benchmark_pgfplots.csv};

\addplot+[forget plot, mark=triangle*, color=cTransMUSIC]
    table[x=SourcesM, y=TransMUSICStage2DOADeg, col sep=comma] {numerical_results/non_coherent_M_sweep_benchmark_pgfplots.csv};

\addplot+[forget plot, mark=o, color=cDNNComplex]
    table[x=SourcesM, y=DNNComplexStage2DOADeg, col sep=comma] {numerical_results/non_coherent_M_sweep_benchmark_pgfplots.csv};
\end{axis}
\end{tikzpicture}
\caption{DoA RMSPE.}
\end{subfigure}\hfill
\begin{subfigure}[t]{0.32\textwidth}
\centering
\begin{tikzpicture}
\begin{axis}[benchmarkSubfigureAxis,
xlabel={Number of sources $M$},
    xmin=1.88,
    xmax=5.12,
    xtick={2,3,4,5},
    ylabel={Uncertainty loss},
    ymode=log,]
\addplot+[forget plot, mark=none, dashed, very thick, color=cCCRB]
    table[x=SourcesM, y=CCRBRefUELoss, col sep=comma] {numerical_results/non_coherent_M_sweep_benchmark_pgfplots.csv};

\addplot+[forget plot, mark=*, color=cESPRIT]
    table[x=SourcesM, y=ESPRITNetUELoss, col sep=comma] {numerical_results/non_coherent_M_sweep_benchmark_pgfplots.csv};

\addplot+[forget plot, mark=square*, color=cSSNStageOne]
    table[x=SourcesM, y=SSNStage1NetUELoss, col sep=comma] {numerical_results/non_coherent_M_sweep_benchmark_pgfplots.csv};

\addplot+[forget plot, mark=diamond*, color=cSSNStageTwo]
    table[x=SourcesM, y=SSNStage2NetUELoss, col sep=comma] {numerical_results/non_coherent_M_sweep_benchmark_pgfplots.csv};

\addplot+[forget plot, mark=triangle*, color=cTransMUSIC]
    table[x=SourcesM, y=TransMUSICStage2NetUELoss, col sep=comma] {numerical_results/non_coherent_M_sweep_benchmark_pgfplots.csv};

\addplot+[forget plot, mark=o, color=cDNNComplex]
    table[x=SourcesM, y=DNNComplexStage2NetUELoss, col sep=comma] {numerical_results/non_coherent_M_sweep_benchmark_pgfplots.csv};
\end{axis}
\end{tikzpicture}
\caption{Uncertainty loss.}
\end{subfigure}\hfill
\begin{subfigure}[t]{0.32\textwidth}
\centering
\begin{tikzpicture}
\begin{axis}[benchmarkSubfigureAxis,
xlabel={Number of sources $M$},
    xmin=1.88,
    xmax=5.12,
    xtick={2,3,4,5},
    ylabel={Log norm. ANEES},
    ymajorgrids=true,]
\addplot+[forget plot, mark=none, dashed, very thick, color=cCCRB] coordinates {(1.88,0) (5.12,0)};

\addplot+[forget plot, mark=*, color=cESPRIT]
    table[x=SourcesM, y=ESPRITLogANEESNormalized, col sep=comma] {numerical_results/non_coherent_M_sweep_benchmark_pgfplots.csv};

\addplot+[forget plot, mark=square*, color=cSSNStageOne]
    table[x=SourcesM, y=SSNStage1LogANEESNormalized, col sep=comma] {numerical_results/non_coherent_M_sweep_benchmark_pgfplots.csv};

\addplot+[forget plot, mark=diamond*, color=cSSNStageTwo]
    table[x=SourcesM, y=SSNStage2LogANEESNormalized, col sep=comma] {numerical_results/non_coherent_M_sweep_benchmark_pgfplots.csv};

\addplot+[forget plot, mark=triangle*, color=cTransMUSIC]
    table[x=SourcesM, y=TransMUSICStage2LogANEESNormalized, col sep=comma] {numerical_results/non_coherent_M_sweep_benchmark_pgfplots.csv};

\addplot+[forget plot, mark=o, color=cDNNComplex]
    table[x=SourcesM, y=DNNComplexStage2LogANEESNormalized, col sep=comma] {numerical_results/non_coherent_M_sweep_benchmark_pgfplots.csv};
\end{axis}
\end{tikzpicture}
\caption{Log normalized ANEES.}
\end{subfigure}

\caption{Non-coherent sources, performance measures versus source count ($M$)}
\label{fig:non_coherent_m_benchmark_combined}
\label{fig:non_coherent_m_benchmark_doa}
\label{fig:non_coherent_m_benchmark_uncertainty}
\label{fig:non_coherent_m_benchmark_log_anees}

\end{figure*}

%% file: plotting_latex_code/fixed_anchor_sweep_snr0.tex
\input{plotting_latex_code/fixed_anchor_sweep_snr0_accuracy.tex}
\input{plotting_latex_code/fixed_anchor_sweep_snr0_variance.tex}

%% file: plotting_latex_code/fixed_anchor_sweep_snr0_accuracy.tex
\begin{figure}[!t]
\scriptsize
{\fontsize{6}{7}\selectfont\makebox[\textwidth][l]{\textcolor{cSSNStageTwo}{\rule{1em}{0.9pt}}\,USSN\hspace{1em}\textcolor{cTransMUSIC}{\rule{1em}{0.9pt}}\,TransMUSIC\hspace{1em}\textcolor{cDNNComplex}{\rule{1em}{0.9pt}}\,Data-driven complex\hspace{1em}\textcolor{cESPRIT}{\rule{1em}{0.9pt}}\,ESPRIT}}
\par\vspace{0.6em}
\begin{minipage}{0.5\textwidth}
\centering
\begin{tikzpicture}
\begin{axis}[angleSweepZeroAxis,xlabel={Moving-source angle [$^\circ$]},ylabel={RMSPE [$^\circ$]},ymode=log]
\addplot+[forget plot,color=cSSNStageTwo,mark=diamond*,mark size=0.8pt,line width=0.9pt] table[col sep=comma,x=SweepAngleDeg,y=USSNAccuracyDeg] {numerical_results/fixed_anchor_sweep_snr0.csv};
\addplot+[forget plot,color=cTransMUSIC,mark=triangle*,mark size=0.8pt,line width=0.9pt] table[col sep=comma,x=SweepAngleDeg,y=TransMUSICAccuracyDeg] {numerical_results/fixed_anchor_sweep_snr0.csv};
\addplot+[forget plot,color=cDNNComplex,mark=square*,mark size=0.8pt,line width=0.9pt] table[col sep=comma,x=SweepAngleDeg,y=DataDrivenAccuracyDeg] {numerical_results/fixed_anchor_sweep_snr0.csv};
\addplot+[forget plot,color=cESPRIT,mark=*,mark size=0.8pt,line width=0.9pt] table[col sep=comma,x=SweepAngleDeg,y=ESPRITAccuracyDeg] {numerical_results/fixed_anchor_sweep_snr0.csv};
\end{axis}
\end{tikzpicture}
\end{minipage}
\caption{DoA Accuracy of the moving source}
\label{fig:all_models_fixed_anchor_sweep_snr0_accuracy}
\vspace{-0.2cm}
\end{figure}

%% file: plotting_latex_code/fixed_anchor_sweep_snr0_variance.tex
\begin{figure*}[!t]
\centering\scriptsize
{\fontsize{6}{7}\selectfont\makebox[\textwidth][c]{\textcolor{cSSNStageTwo}{\rule{1em}{0.9pt}}\,USSN\hspace{1em}\textcolor{cTransMUSIC}{\rule{1em}{0.9pt}}\,TransMUSIC\hspace{1em}\textcolor{cDNNComplex}{\rule{1em}{0.9pt}}\,Data-driven complex\hspace{1em}\textcolor{cESPRIT}{\rule{1em}{0.9pt}}\,ESPRIT}}
\par\vspace{0.6em}
\begin{subfigure}[t]{0.48\textwidth}
\centering
\begin{tikzpicture}
\begin{axis}[angleSweepZeroAxis,width=0.97\linewidth, height=0.56\linewidth,  xlabel={Moving-source angle [$^\circ$]},ylabel={Estimated variance [$^\circ{}^2$]},ymode=log]
\addplot+[forget plot,color=cSSNStageTwo,mark=diamond*,mark size=0.8pt,line width=0.9pt] table[col sep=comma,x=SweepAngleDeg,y=USSNMovingVarianceDeg2] {numerical_results/fixed_anchor_sweep_snr0.csv};
\addplot+[forget plot,color=cTransMUSIC,mark=triangle*,mark size=0.8pt,line width=0.9pt] table[col sep=comma,x=SweepAngleDeg,y=TransMUSICMovingVarianceDeg2] {numerical_results/fixed_anchor_sweep_snr0.csv};
\addplot+[forget plot,color=cDNNComplex,mark=square*,mark size=0.8pt,line width=0.9pt] table[col sep=comma,x=SweepAngleDeg,y=DataDrivenMovingVarianceDeg2] {numerical_results/fixed_anchor_sweep_snr0.csv};
\addplot+[forget plot,color=cESPRIT,mark=*,mark size=0.8pt,line width=0.9pt] table[col sep=comma,x=SweepAngleDeg,y=ESPRITMovingVarianceDeg2] {numerical_results/fixed_anchor_sweep_snr0.csv};
\end{axis}
\end{tikzpicture}
\caption{Moving-source variance}
\end{subfigure}\hfill
\begin{subfigure}[t]{0.48\textwidth}
\centering
\begin{tikzpicture}
\begin{axis}[angleSweepZeroAxis,width=0.97\linewidth, height=0.56\linewidth, xlabel={Moving-source angle [$^\circ$]},ylabel={Estimated variance [$^\circ{}^2$]},ymode=log]
\addplot+[forget plot,color=cSSNStageTwo,mark=diamond*,mark size=0.8pt,line width=0.9pt] table[col sep=comma,x=SweepAngleDeg,y=USSNAnchorVarianceDeg2] {numerical_results/fixed_anchor_sweep_snr0.csv};
\addplot+[forget plot,color=cTransMUSIC,mark=triangle*,mark size=0.8pt,line width=0.9pt] table[col sep=comma,x=SweepAngleDeg,y=TransMUSICAnchorVarianceDeg2] {numerical_results/fixed_anchor_sweep_snr0.csv};
\addplot+[forget plot,color=cDNNComplex,mark=square*,mark size=0.8pt,line width=0.9pt] table[col sep=comma,x=SweepAngleDeg,y=DataDrivenAnchorVarianceDeg2] {numerical_results/fixed_anchor_sweep_snr0.csv};
\addplot+[forget plot,color=cESPRIT,mark=*,mark size=0.8pt,line width=0.9pt] table[col sep=comma,x=SweepAngleDeg,y=ESPRITAnchorVarianceDeg2] {numerical_results/fixed_anchor_sweep_snr0.csv};
\end{axis}
\end{tikzpicture}
\caption{Fixed-source variance}
\end{subfigure}
\caption{Estimated variance of sources}
\label{fig:all_models_fixed_anchor_sweep_snr0_variance}
\end{figure*}

%% file: plotting_latex_code/modulation_16qam_snr_sweep_figures_pgfplots.tex

\begin{figure*}[!t]
\centering
\scriptsize
{\fontsize{5.2}{6.0}\selectfont
\makebox[\textwidth][c]{%
\textcolor{cCCRB}{\rule{0.9em}{0.7pt}}\,CCRB / ideal reference\hspace{0.70em}%
\textcolor{cESPRIT}{\rule{0.9em}{0.9pt}}\,ESPRIT\hspace{0.70em}%
\textcolor{cSSNStageOne}{\rule{0.9em}{0.9pt}}\,SubspaceNet Stage~1\hspace{0.70em}%
\textcolor{cSSNStageTwo}{\rule{0.9em}{0.9pt}}\,USSN\hspace{0.70em}%
\textcolor{cTransMUSIC}{\rule{0.9em}{0.9pt}}\,TransMUSIC\hspace{0.70em}%
\textcolor{cDNNComplex}{\rule{0.9em}{0.9pt}}\,Data-driven complex%
}
\par}

\vspace{0.25em}
\begin{subfigure}[t]{0.32\textwidth}
\centering
\begin{tikzpicture}
\begin{axis}[benchmarkSubfigureAxis,
xlabel={SNR [dB]},
    xmin=-11.2,
    xmax=21.2,
    xtick={-10,-3,0,3,10,20},
    ylabel={DoA RMSPE [deg]},
    ymode=log,]
\addplot+[forget plot, mark=none, dashed, very thick, color=cCCRB]
    table[x=SNRdB, y=CCRBSigmaDeg, col sep=comma] {numerical_results/modulation_16qam_snr_sweep_benchmark_pgfplots.csv};

\addplot+[forget plot, mark=*, color=cESPRIT]
    table[x=SNRdB, y=ESPRITDOADeg, col sep=comma] {numerical_results/modulation_16qam_snr_sweep_benchmark_pgfplots.csv};

\addplot+[forget plot, mark=square*, color=cSSNStageOne]
    table[x=SNRdB, y=SSNStage1DOADeg, col sep=comma] {numerical_results/modulation_16qam_snr_sweep_benchmark_pgfplots.csv};

\addplot+[forget plot, mark=diamond*, color=cSSNStageTwo]
    table[x=SNRdB, y=SSNStage2DOADeg, col sep=comma] {numerical_results/modulation_16qam_snr_sweep_benchmark_pgfplots.csv};

\addplot+[forget plot, mark=triangle*, color=cTransMUSIC]
    table[x=SNRdB, y=TransMUSICStage2DOADeg, col sep=comma] {numerical_results/modulation_16qam_snr_sweep_benchmark_pgfplots.csv};

\addplot+[forget plot, mark=o, color=cDNNComplex]
    table[x=SNRdB, y=DNNComplexStage2DOADeg, col sep=comma] {numerical_results/modulation_16qam_snr_sweep_benchmark_pgfplots.csv};
\end{axis}
\end{tikzpicture}
\caption{DoA RMSPE.}
\end{subfigure}\hfill
\begin{subfigure}[t]{0.32\textwidth}
\centering
\begin{tikzpicture}
\begin{axis}[benchmarkSubfigureAxis,
xlabel={SNR [dB]},
    xmin=-11.2,
    xmax=21.2,
    xtick={-10,-3,0,3,10,20},
    ylabel={Uncertainty loss},
    ymode=log,]
\addplot+[forget plot, mark=none, dashed, very thick, color=cCCRB]
    table[x=SNRdB, y=CCRBRefUELoss, col sep=comma] {numerical_results/modulation_16qam_snr_sweep_benchmark_pgfplots.csv};

\addplot+[forget plot, mark=*, color=cESPRIT]
    table[x=SNRdB, y=ESPRITNetUELoss, col sep=comma] {numerical_results/modulation_16qam_snr_sweep_benchmark_pgfplots.csv};

\addplot+[forget plot, mark=square*, color=cSSNStageOne]
    table[x=SNRdB, y=SSNStage1NetUELoss, col sep=comma] {numerical_results/modulation_16qam_snr_sweep_benchmark_pgfplots.csv};

\addplot+[forget plot, mark=diamond*, color=cSSNStageTwo]
    table[x=SNRdB, y=SSNStage2NetUELoss, col sep=comma] {numerical_results/modulation_16qam_snr_sweep_benchmark_pgfplots.csv};

\addplot+[forget plot, mark=triangle*, color=cTransMUSIC]
    table[x=SNRdB, y=TransMUSICStage2NetUELoss, col sep=comma] {numerical_results/modulation_16qam_snr_sweep_benchmark_pgfplots.csv};

\addplot+[forget plot, mark=o, color=cDNNComplex]
    table[x=SNRdB, y=DNNComplexStage2NetUELoss, col sep=comma] {numerical_results/modulation_16qam_snr_sweep_benchmark_pgfplots.csv};
\end{axis}
\end{tikzpicture}
\caption{Uncertainty loss.}
\end{subfigure}\hfill
\begin{subfigure}[t]{0.32\textwidth}
\centering
\begin{tikzpicture}
\begin{axis}[benchmarkSubfigureAxis,
xlabel={SNR [dB]},
    xmin=-11.2,
    xmax=21.2,
    xtick={-10,-3,0,3,10,20},
    ylabel={Log norm. ANEES},
    ymajorgrids=true,]
\addplot+[forget plot, mark=none, dashed, very thick, color=cCCRB] coordinates {(-11.2,0) (21.2,0)};

\addplot+[forget plot, mark=*, color=cESPRIT]
    table[x=SNRdB, y=ESPRITLogANEESNormalized, col sep=comma] {numerical_results/modulation_16qam_snr_sweep_benchmark_pgfplots.csv};

\addplot+[forget plot, mark=square*, color=cSSNStageOne]
    table[x=SNRdB, y=SSNStage1LogANEESNormalized, col sep=comma] {numerical_results/modulation_16qam_snr_sweep_benchmark_pgfplots.csv};

\addplot+[forget plot, mark=diamond*, color=cSSNStageTwo]
    table[x=SNRdB, y=SSNStage2LogANEESNormalized, col sep=comma] {numerical_results/modulation_16qam_snr_sweep_benchmark_pgfplots.csv};

\addplot+[forget plot, mark=triangle*, color=cTransMUSIC]
    table[x=SNRdB, y=TransMUSICStage2LogANEESNormalized, col sep=comma] {numerical_results/modulation_16qam_snr_sweep_benchmark_pgfplots.csv};

\addplot+[forget plot, mark=o, color=cDNNComplex]
    table[x=SNRdB, y=DNNComplexStage2LogANEESNormalized, col sep=comma] {numerical_results/modulation_16qam_snr_sweep_benchmark_pgfplots.csv};
\end{axis}
\end{tikzpicture}
\caption{Log normalized ANEES.}
\end{subfigure}

\caption{16-QAM-modulation sources, performance measures versus  SNR}
\label{fig:modulation_16qam_snr_benchmark_combined}
\label{fig:modulation_16qam_snr_benchmark_doa}
\label{fig:modulation_16qam_snr_benchmark_uncertainty}
\label{fig:modulation_16qam_snr_benchmark_log_anees}

\end{figure*}

%% file: plotting_latex_code/coherent_snr_sweep_figures_pgfplots.tex


\begin{figure*}[!t]
\centering
\scriptsize
{\fontsize{5.2}{6.0}\selectfont
\makebox[\textwidth][c]{%
\textcolor{cCCRB}{\rule{0.9em}{0.7pt}}\, Ideal reference\hspace{0.70em}%
\textcolor{cESPRIT}{\rule{0.9em}{0.9pt}}\,ESPRIT\hspace{0.70em}%
\textcolor{cSSNStageOne}{\rule{0.9em}{0.9pt}}\,SubspaceNet Stage~1\hspace{0.70em}%
\textcolor{cSSNStageTwo}{\rule{0.9em}{0.9pt}}\,USSN\hspace{0.70em}%
\textcolor{cTransMUSIC}{\rule{0.9em}{0.9pt}}\,TransMUSIC\hspace{0.70em}%
\textcolor{cDNNComplex}{\rule{0.9em}{0.9pt}}\,Data-driven complex%
}
\par}

\vspace{0.25em}
\begin{subfigure}[t]{0.32\textwidth}
\centering
\begin{tikzpicture}
\begin{axis}[benchmarkSubfigureAxis,
xlabel={SNR [dB]},
    xmin=-11.2,
    xmax=21.2,
    xtick={-10,-3,0,3,10,20},
    ylabel={DoA RMSPE [deg]},
    ymode=log,]
\addplot+[forget plot, mark=*, color=cESPRIT]
    table[x=SNRdB, y=ESPRITDOADeg, col sep=comma] {numerical_results/coherent_snr_sweep_benchmark_pgfplots.csv};

\addplot+[forget plot, mark=square*, color=cSSNStageOne]
    table[x=SNRdB, y=SSNStage1DOADeg, col sep=comma] {numerical_results/coherent_snr_sweep_benchmark_pgfplots.csv};

\addplot+[forget plot, mark=diamond*, color=cSSNStageTwo]
    table[x=SNRdB, y=SSNStage2DOADeg, col sep=comma] {numerical_results/coherent_snr_sweep_benchmark_pgfplots.csv};

\addplot+[forget plot, mark=triangle*, color=cTransMUSIC]
    table[x=SNRdB, y=TransMUSICStage2DOADeg, col sep=comma] {numerical_results/coherent_snr_sweep_benchmark_pgfplots.csv};

\addplot+[forget plot, mark=o, color=cDNNComplex]
    table[x=SNRdB, y=DNNComplexStage2DOADeg, col sep=comma] {numerical_results/coherent_snr_sweep_benchmark_pgfplots.csv};
\end{axis}
\end{tikzpicture}
\caption{DoA RMSPE.}
\end{subfigure}\hfill
\begin{subfigure}[t]{0.32\textwidth}
\centering
\begin{tikzpicture}
\begin{axis}[benchmarkSubfigureAxis,
xlabel={SNR [dB]},
    xmin=-11.2,
    xmax=21.2,
    xtick={-10,-3,0,3,10,20},
    ylabel={Uncertainty loss},
    ymode=log,]
\addplot+[forget plot, mark=*, color=cESPRIT]
    table[x=SNRdB, y=ESPRITNetUELoss, col sep=comma] {numerical_results/coherent_snr_sweep_benchmark_pgfplots.csv};

\addplot+[forget plot, mark=square*, color=cSSNStageOne]
    table[x=SNRdB, y=SSNStage1NetUELoss, col sep=comma] {numerical_results/coherent_snr_sweep_benchmark_pgfplots.csv};

\addplot+[forget plot, mark=diamond*, color=cSSNStageTwo]
    table[x=SNRdB, y=SSNStage2NetUELoss, col sep=comma] {numerical_results/coherent_snr_sweep_benchmark_pgfplots.csv};

\addplot+[forget plot, mark=triangle*, color=cTransMUSIC]
    table[x=SNRdB, y=TransMUSICStage2NetUELoss, col sep=comma] {numerical_results/coherent_snr_sweep_benchmark_pgfplots.csv};

\addplot+[forget plot, mark=o, color=cDNNComplex]
    table[x=SNRdB, y=DNNComplexStage2NetUELoss, col sep=comma] {numerical_results/coherent_snr_sweep_benchmark_pgfplots.csv};
\end{axis}
\end{tikzpicture}
\caption{Uncertainty loss.}
\end{subfigure}\hfill
\begin{subfigure}[t]{0.32\textwidth}
\centering
\begin{tikzpicture}
\begin{axis}[benchmarkSubfigureAxis,
xlabel={SNR [dB]},
    xmin=-11.2,
    xmax=21.2,
    xtick={-10,-3,0,3,10,20},
    ylabel={Log norm. ANEES},
    ymajorgrids=true,]
\addplot+[forget plot, mark=none, dashed, very thick, color=cCCRB] coordinates {(-11.2,0) (21.2,0)};

\addplot+[forget plot, mark=*, color=cESPRIT]
    table[x=SNRdB, y=ESPRITLogANEESNormalized, col sep=comma] {numerical_results/coherent_snr_sweep_benchmark_pgfplots.csv};

\addplot+[forget plot, mark=square*, color=cSSNStageOne]
    table[x=SNRdB, y=SSNStage1LogANEESNormalized, col sep=comma] {numerical_results/coherent_snr_sweep_benchmark_pgfplots.csv};

\addplot+[forget plot, mark=diamond*, color=cSSNStageTwo]
    table[x=SNRdB, y=SSNStage2LogANEESNormalized, col sep=comma] {numerical_results/coherent_snr_sweep_benchmark_pgfplots.csv};

\addplot+[forget plot, mark=triangle*, color=cTransMUSIC]
    table[x=SNRdB, y=TransMUSICStage2LogANEESNormalized, col sep=comma] {numerical_results/coherent_snr_sweep_benchmark_pgfplots.csv};

\addplot+[forget plot, mark=o, color=cDNNComplex]
    table[x=SNRdB, y=DNNComplexStage2LogANEESNormalized, col sep=comma] {numerical_results/coherent_snr_sweep_benchmark_pgfplots.csv};
\end{axis}
\end{tikzpicture}
\caption{Log normalized ANEES.}
\end{subfigure}

\caption{Coherent sources, performance measures versus  SNR}
\label{fig:coherent_snr_benchmark_combined}
\label{fig:coherent_snr_benchmark_doa}
\label{fig:coherent_snr_benchmark_uncertainty}
\label{fig:coherent_snr_benchmark_log_anees}

\end{figure*}

%% file: plotting_latex_code/coherent_eta_sweep_figures_pgfplots.tex


\begin{figure*}[!t]
\centering
\scriptsize
{\fontsize{5.2}{6.0}\selectfont
\makebox[\textwidth][c]{%
\textcolor{cCCRB}{\rule{0.9em}{0.7pt}}\,Ideal reference\hspace{0.70em}%
\textcolor{cESPRIT}{\rule{0.9em}{0.9pt}}\,ESPRIT\hspace{0.70em}%
\textcolor{cSSNStageOne}{\rule{0.9em}{0.9pt}}\,SubspaceNet Stage~1\hspace{0.70em}%
\textcolor{cSSNStageTwo}{\rule{0.9em}{0.9pt}}\,USSN\hspace{0.70em}%
\textcolor{cTransMUSIC}{\rule{0.9em}{0.9pt}}\,TransMUSIC\hspace{0.70em}%
\textcolor{cDNNComplex}{\rule{0.9em}{0.9pt}}\,Data-driven complex%
}
\par}

\vspace{0.25em}
\begin{subfigure}[t]{0.32\textwidth}
\centering
\begin{tikzpicture}
\begin{axis}[benchmarkSubfigureAxis,
xlabel={Array perturbation $\eta$},
    xmin=-0.0012,
    xmax=0.0312,
    xtick={0,0.005,0.01,0.015,0.02,0.025,0.03},
    ylabel={DoA RMSPE [deg]},
    ymode=log,]
\addplot+[forget plot, mark=*, color=cESPRIT]
    table[x=Eta, y=ESPRITDOADeg, col sep=comma] {numerical_results/coherent_eta_sweep_benchmark_pgfplots.csv};

\addplot+[forget plot, mark=square*, color=cSSNStageOne]
    table[x=Eta, y=SSNStage1DOADeg, col sep=comma] {numerical_results/coherent_eta_sweep_benchmark_pgfplots.csv};

\addplot+[forget plot, mark=diamond*, color=cSSNStageTwo]
    table[x=Eta, y=SSNStage2DOADeg, col sep=comma] {numerical_results/coherent_eta_sweep_benchmark_pgfplots.csv};

\addplot+[forget plot, mark=triangle*, color=cTransMUSIC]
    table[x=Eta, y=TransMUSICStage2DOADeg, col sep=comma] {numerical_results/coherent_eta_sweep_benchmark_pgfplots.csv};

\addplot+[forget plot, mark=o, color=cDNNComplex]
    table[x=Eta, y=DNNComplexStage2DOADeg, col sep=comma] {numerical_results/coherent_eta_sweep_benchmark_pgfplots.csv};
\end{axis}
\end{tikzpicture}
\caption{DoA RMSPE.}
\end{subfigure}\hfill
\begin{subfigure}[t]{0.32\textwidth}
\centering
\begin{tikzpicture}
\begin{axis}[benchmarkSubfigureAxis,
xlabel={Array perturbation $\eta$},
    xmin=-0.0012,
    xmax=0.0312,
    xtick={0,0.005,0.01,0.015,0.02,0.025,0.03},
    ylabel={Uncertainty loss},
    ymode=log,]
\addplot+[forget plot, mark=*, color=cESPRIT]
    table[x=Eta, y=ESPRITNetUELoss, col sep=comma] {numerical_results/coherent_eta_sweep_benchmark_pgfplots.csv};

\addplot+[forget plot, mark=square*, color=cSSNStageOne]
    table[x=Eta, y=SSNStage1NetUELoss, col sep=comma] {numerical_results/coherent_eta_sweep_benchmark_pgfplots.csv};

\addplot+[forget plot, mark=diamond*, color=cSSNStageTwo]
    table[x=Eta, y=SSNStage2NetUELoss, col sep=comma] {numerical_results/coherent_eta_sweep_benchmark_pgfplots.csv};

\addplot+[forget plot, mark=triangle*, color=cTransMUSIC]
    table[x=Eta, y=TransMUSICStage2NetUELoss, col sep=comma] {numerical_results/coherent_eta_sweep_benchmark_pgfplots.csv};

\addplot+[forget plot, mark=o, color=cDNNComplex]
    table[x=Eta, y=DNNComplexStage2NetUELoss, col sep=comma] {numerical_results/coherent_eta_sweep_benchmark_pgfplots.csv};
\end{axis}
\end{tikzpicture}
\caption{Uncertainty loss.}
\end{subfigure}\hfill
\begin{subfigure}[t]{0.32\textwidth}
\centering
\begin{tikzpicture}
\begin{axis}[benchmarkSubfigureAxis,
xlabel={Array perturbation $\eta$},
    xmin=-0.0012,
    xmax=0.0312,
    xtick={0,0.005,0.01,0.015,0.02,0.025,0.03},
    ylabel={Log norm. ANEES},
    ymajorgrids=true,]
\addplot+[forget plot, mark=none, dashed, very thick, color=cCCRB] coordinates {(-0.0012,0) (0.0312,0)};

\addplot+[forget plot, mark=*, color=cESPRIT]
    table[x=Eta, y=ESPRITLogANEESNormalized, col sep=comma] {numerical_results/coherent_eta_sweep_benchmark_pgfplots.csv};

\addplot+[forget plot, mark=square*, color=cSSNStageOne]
    table[x=Eta, y=SSNStage1LogANEESNormalized, col sep=comma] {numerical_results/coherent_eta_sweep_benchmark_pgfplots.csv};

\addplot+[forget plot, mark=diamond*, color=cSSNStageTwo]
    table[x=Eta, y=SSNStage2LogANEESNormalized, col sep=comma] {numerical_results/coherent_eta_sweep_benchmark_pgfplots.csv};

\addplot+[forget plot, mark=triangle*, color=cTransMUSIC]
    table[x=Eta, y=TransMUSICStage2LogANEESNormalized, col sep=comma] {numerical_results/coherent_eta_sweep_benchmark_pgfplots.csv};

\addplot+[forget plot, mark=o, color=cDNNComplex]
    table[x=Eta, y=DNNComplexStage2LogANEESNormalized, col sep=comma] {numerical_results/coherent_eta_sweep_benchmark_pgfplots.csv};
\end{axis}
\end{tikzpicture}
\caption{Log normalized ANEES.}
\end{subfigure}

\caption{Coherent sources, performance measures versus  array-perturbation ($\eta$)}
\label{fig:coherent_eta_benchmark_combined}
\label{fig:coherent_eta_benchmark_doa}
\label{fig:coherent_eta_benchmark_uncertainty}
\label{fig:coherent_eta_benchmark_log_anees}

\end{figure*}

%% file: plotting_latex_code/ssn_fulltrain_vs_cp_stage1_vs_stage2_uq_figures_pgfplots.tex
%

\definecolor{fullSOneBlue}{RGB}{31,119,180}
\definecolor{cpSOneRed}{RGB}{214,39,40}
\definecolor{stageTwoGreen}{RGB}{44,160,44}

\pgfplotsset{
  fulltraincpaxis/.style={
    width=0.97\linewidth,
    height=0.70\linewidth,
    grid=both,
    grid style={gray!20},
    xlabel={SNR [dB]},
    xtick={-10,-3,0,3,10,20},
    xmin=-11.5, xmax=21.5,
    tick label style={font=\scriptsize},
    label style={font=\footnotesize},
    title style={font=\footnotesize\bfseries},
    every axis plot/.append style={line width=0.95pt},
  },
  fullsone/.style={fullSOneBlue, mark=o, mark size=1.6pt},
  cpsone/.style={cpSOneRed, mark=triangle*, mark size=1.8pt},
  stagetwo/.style={stageTwoGreen, mark=square*, mark size=1.6pt},
  cptarget/.style={black, dashed, line width=0.75pt, mark=none},
}

\begin{figure*}[tbp]
\centering
\scriptsize
\textcolor{fullSOneBlue}{\rule{0.9em}{0.9pt}} MC\hspace{1.2em}
\textcolor{cpSOneRed}{\rule{0.9em}{0.9pt}} CP \hspace{1.2em}
\textcolor{stageTwoGreen}{\rule{0.9em}{0.9pt}} USSN \hspace{1.2em}
\textcolor{black}{\rule{0.9em}{0.7pt}} target / ideal value.

\vspace{0.25em}
\begin{subfigure}[t]{0.32\textwidth}
\centering
\begin{tikzpicture}
\begin{axis}[fulltraincpaxis, ylabel={RMSE [deg]}, ymode=log, log basis y=10, ymin=0.25, ymax=50]
\addplot[fullsone] table[x=snr_db,y=full_stage1_mc_rmse_deg,col sep=comma]{numerical_results/ssn_fulltrain_vs_cp_stage1_vs_stage2_uq_plot_data.csv};
\addplot[cpsone] table[x=snr_db,y=cp_stage1_rmse_deg,col sep=comma]{numerical_results/ssn_fulltrain_vs_cp_stage1_vs_stage2_uq_plot_data.csv};
\addplot[stagetwo] table[x=snr_db,y=stage2_rmse_deg,col sep=comma]{numerical_results/ssn_fulltrain_vs_cp_stage1_vs_stage2_uq_plot_data.csv};
\end{axis}
\end{tikzpicture}
\caption{DoA accuracy}
\end{subfigure}\hfill
\begin{subfigure}[t]{0.32\textwidth}
\centering
\begin{tikzpicture}
\begin{axis}[fulltraincpaxis, ylabel={Coverage}, ymin=0, ymax=1]
\addplot[cptarget] table[x=snr_db,y=target_coverage,col sep=comma]{numerical_results/ssn_fulltrain_vs_cp_stage1_vs_stage2_uq_plot_data.csv};
\addplot[fullsone] table[x=snr_db,y=full_stage1_mc_source_coverage_target,col sep=comma]{numerical_results/ssn_fulltrain_vs_cp_stage1_vs_stage2_uq_plot_data.csv};
\addplot[cpsone] table[x=snr_db,y=cp_stage1_source_coverage_target,col sep=comma]{numerical_results/ssn_fulltrain_vs_cp_stage1_vs_stage2_uq_plot_data.csv};
\addplot[stagetwo] table[x=snr_db,y=stage2_source_coverage_target,col sep=comma]{numerical_results/ssn_fulltrain_vs_cp_stage1_vs_stage2_uq_plot_data.csv};
\end{axis}
\end{tikzpicture}
\caption{Primary CP metric}
\end{subfigure}\hfill
\begin{subfigure}[t]{0.32\textwidth}
\centering
\begin{tikzpicture}
\begin{axis}[fulltraincpaxis, ylabel={Joint coverage}, ymin=0, ymax=1]
\addplot[cptarget] table[x=snr_db,y=target_coverage,col sep=comma]{numerical_results/ssn_fulltrain_vs_cp_stage1_vs_stage2_uq_plot_data.csv};
\addplot[fullsone] table[x=snr_db,y=full_stage1_mc_coverage_chi2_target,col sep=comma]{numerical_results/ssn_fulltrain_vs_cp_stage1_vs_stage2_uq_plot_data.csv};
\addplot[cpsone] table[x=snr_db,y=cp_stage1_coverage_chi2_target,col sep=comma]{numerical_results/ssn_fulltrain_vs_cp_stage1_vs_stage2_uq_plot_data.csv};
\addplot[stagetwo] table[x=snr_db,y=stage2_coverage_chi2_target,col sep=comma]{numerical_results/ssn_fulltrain_vs_cp_stage1_vs_stage2_uq_plot_data.csv};
\end{axis}
\end{tikzpicture}
\caption{Full-covariance diagnostic}
\end{subfigure}
\caption{Accuracy and coverage for uncertainty extraction methods}
\label{fig:fulltrain-cp-accuracy-coverage}
\end{figure*}

\begin{figure*}[tbp]
\centering
\scriptsize
\textcolor{fullSOneBlue}{\rule{0.9em}{0.9pt}} MC\hspace{1.2em}
\textcolor{cpSOneRed}{\rule{0.9em}{0.9pt}} CP \hspace{1.2em}
\textcolor{stageTwoGreen}{\rule{0.9em}{0.9pt}} USSN \hspace{1.2em}
\textcolor{black}{\rule{0.9em}{0.7pt}} target / ideal value.

\vspace{0.25em}
\begin{subfigure}[t]{0.48\textwidth}
\centering
\begin{tikzpicture}
\begin{axis}[fulltraincpaxis, width=0.97\linewidth, height=0.56\linewidth, ylabel={ANEES / dimension}, ymode=log, log basis y=10, ymin=0.5, ymax=300]
\addplot[cptarget] coordinates {(-10,1)(20,1)};
\addplot[fullsone] table[x=snr_db,y=full_stage1_mc_anees_normalized,col sep=comma]{numerical_results/ssn_fulltrain_vs_cp_stage1_vs_stage2_uq_plot_data.csv};
\addplot[cpsone] table[x=snr_db,y=cp_stage1_anees_normalized,col sep=comma]{numerical_results/ssn_fulltrain_vs_cp_stage1_vs_stage2_uq_plot_data.csv};
\addplot[stagetwo] table[x=snr_db,y=stage2_anees_normalized,col sep=comma]{numerical_results/ssn_fulltrain_vs_cp_stage1_vs_stage2_uq_plot_data.csv};
\end{axis}
\end{tikzpicture}
\caption{ANEES}
\end{subfigure}\hfill
\begin{subfigure}[t]{0.48\textwidth}
\centering
\begin{tikzpicture}
\begin{axis}[fulltraincpaxis, width=0.97\linewidth, height=0.56\linewidth, ylabel={Mean $\sigma$ [deg]}, ymode=log, log basis y=10, ymin=0.2, ymax=40]
\addplot[fullsone] table[x=snr_db,y=full_stage1_mc_mean_sigma_deg,col sep=comma]{numerical_results/ssn_fulltrain_vs_cp_stage1_vs_stage2_uq_plot_data.csv};
\addplot[cpsone] table[x=snr_db,y=cp_stage1_mean_sigma_deg,col sep=comma]{numerical_results/ssn_fulltrain_vs_cp_stage1_vs_stage2_uq_plot_data.csv};
\addplot[stagetwo] table[x=snr_db,y=stage2_mean_sigma_deg,col sep=comma]{numerical_results/ssn_fulltrain_vs_cp_stage1_vs_stage2_uq_plot_data.csv};
\end{axis}
\end{tikzpicture}
\caption{half-widths averaged across sources.}
\end{subfigure}\hfill
\caption{Uncertainty diagnostics of CP and USSN}
\label{fig:fulltrain-cp-uncertainty}
\end{figure*}

\begin{figure*}[tbp]
\centering
\scriptsize
\textcolor{cpSOneRed}{\rule{0.9em}{0.9pt}} CP source 1 half-width\hspace{1.2em}
\textcolor{cpSOneRed!65!black}{\rule{0.9em}{0.9pt}} CP source 2 half-width\hspace{1.2em}
\textcolor{stageTwoGreen}{\rule{0.9em}{0.9pt}} USSN.

\vspace{0.25em}
\begin{subfigure}[t]{0.48\textwidth}
\centering
\begin{tikzpicture}
\begin{axis}[fulltraincpaxis, width=0.97\linewidth, height=0.56\linewidth, ylabel={$q_{\alpha,k}$ [deg]}, ymin=0, ymax=35]
\addplot[cpsone] table[x=snr_db,y=cp_qhat_source1_deg,col sep=comma]{numerical_results/ssn_fulltrain_vs_cp_stage1_vs_stage2_uq_plot_data.csv};
\addplot[cpSOneRed!65!black, dashed, mark=diamond*, mark size=1.8pt] table[x=snr_db,y=cp_qhat_source2_deg,col sep=comma]{numerical_results/ssn_fulltrain_vs_cp_stage1_vs_stage2_uq_plot_data.csv};
\end{axis}
\end{tikzpicture}
\caption{66\% confidence region width for each of the two sources)}
\end{subfigure}\hfill
\begin{subfigure}[t]{0.48\textwidth}
\centering
\begin{tikzpicture}
\begin{axis}[fulltraincpaxis, width=0.97\linewidth, height=0.56\linewidth,  ylabel={Gaussian NLL}, ymin=-8, ymax=30, clip=false]
\addplot[cpsone] table[x=snr_db,y=cp_stage1_gaussian_nll,col sep=comma]{numerical_results/ssn_fulltrain_vs_cp_stage1_vs_stage2_uq_plot_data.csv};
\addplot[stagetwo] table[x=snr_db,y=stage2_gaussian_nll,col sep=comma]{numerical_results/ssn_fulltrain_vs_cp_stage1_vs_stage2_uq_plot_data.csv};
\end{axis}
\end{tikzpicture}
\caption{Gaussian NLL}
\end{subfigure}

\caption{The confidence intervals obtained with CP and the resulting Gaussian negative log-likelihood (NLL)}
\label{fig:fulltrain-cp-qhat-nll}
\end{figure*}

%% file: TSP_v01.bbl
\begin{thebibliography}{10}
\providecommand{\url}[1]{#1}
\csname url@samestyle\endcsname
\providecommand{\newblock}{\relax}
\providecommand{\bibinfo}[2]{#2}
\providecommand{\BIBentrySTDinterwordspacing}{\spaceskip=0pt\relax}
\providecommand{\BIBentryALTinterwordstretchfactor}{4}
\providecommand{\BIBentryALTinterwordspacing}{\spaceskip=\fontdimen2\font plus
\BIBentryALTinterwordstretchfactor\fontdimen3\font minus \fontdimen4\font\relax}
\providecommand{\BIBforeignlanguage}[2]{{%
\expandafter\ifx\csname l@#1\endcsname\relax
\typeout{** WARNING: IEEEtran.bst: No hyphenation pattern has been}%
\typeout{** loaded for the language `#1'. Using the pattern for}%
\typeout{** the default language instead.}%
\else
\language=\csname l@#1\endcsname
\fi
#2}}
\providecommand{\BIBdecl}{\relax}
\BIBdecl

\bibitem{zohar2026deep}
R.~Zohar, S.~Ginzach, and N.~Shlezinger, ``Deep learning-aided {DoA} estimation with uncertainty extraction,'' in \emph{IEEE Sensor Array and Multichannel Signal Processing Workshop (SAM)}, 2026.

\bibitem{pillai2012array}
S.~U. Pillai, \emph{Array signal processing}.\hskip 1em plus 0.5em minus 0.4em\relax Springer, 2012.

\bibitem{tuncer2009classical}
T.~E. Tuncer and B.~Friedlander, \emph{Classical and modern direction-of-arrival estimation}.\hskip 1em plus 0.5em minus 0.4em\relax Academic Press, 2009.

\bibitem{haykin1992some}
S.~Haykin, J.~Reilly, V.~Kezys, and E.~Vertatschitsch, ``Some aspects of array signal processing,'' in \emph{IEE Proceedings F (Radar and Signal Processing)}, vol. 139, no.~1.\hskip 1em plus 0.5em minus 0.4em\relax IET, 1992, pp. 1--26.

\bibitem{konstantino2026unsupervised}
S.~Konstantino, L.~Li, N.~Shlezinger, and D.~Dardari, ``Unsupervised adaptation of {AI DoA} estimators via downstream tracking,'' in \emph{Proc. IEEE ICASSP}, 2026, pp. 22\,352--22\,356.

\bibitem{hawkes2003wideband}
M.~Hawkes and A.~Nehorai, ``Wideband source localization using a distributed acoustic vector-sensor array,'' \emph{{IEEE} Trans. Signal Process.}, vol.~51, no.~6, pp. 1479--1491, 2003.

\bibitem{yang2020bayesian}
Y.~Yang, S.~Dang, M.~Wen, S.~Mumtaz, and M.~Guizani, ``Bayesian beamforming for mobile millimeter wave channel tracking in the presence of {DOA} uncertainty,'' \emph{{IEEE} Trans. Commun.}, vol.~68, no.~12, pp. 7547--7562, 2020.

\bibitem{lam2006bayesian}
C.~J. Lam and A.~C. Singer, ``Bayesian beamforming for {DOA} uncertainty: theory and implementation,'' \emph{{IEEE} Trans. Signal Process.}, vol.~54, no.~11, pp. 4435--4445, 2006.

\bibitem{capon1969mvdrbf}
J.~Capon, ``High-resolution frequency-wavenumber spectrum analysis,'' \emph{Proc. {IEEE}}, vol.~57, no.~8, pp. 1408--1418, 1969.

\bibitem{benesty2017fundamentals}
J.~Benesty, I.~Cohen, and J.~Chen, \emph{Fundamentals of signal enhancement and array signal processing}.\hskip 1em plus 0.5em minus 0.4em\relax Wiley, 2017.

\bibitem{schmidt1986music}
R.~Schmidt, ``Multiple emitter location and signal parameter estimation,'' \emph{{IEEE} Trans. Antennas Propag.}, vol.~34, no.~3, pp. 276--280, 1986.

\bibitem{Barabell1983ImprovingTR}
A.~J. Barabell, ``Improving the resolution performance of eigenstructure-based direction-finding algorithms,'' in \emph{Proc. IEEE ICASSP}, 1983.

\bibitem{roy1989esprit}
R.~Roy and T.~Kailath, ``{ESPRIT}-estimation of signal parameters via rotational invariance techniques,'' \emph{{IEEE} Trans. Acoust., Speech, Signal Process.}, vol.~37, no.~7, pp. 984--995, 1989.

\bibitem{liu2023twenty}
W.~Liu, M.~Haardt, M.~S. Greco, C.~F. Mecklenbr{\"a}uker, and P.~Willett, ``Twenty-five years of sensor array and multichannel signal processing: A review of progress to date and potential research directions,'' \emph{{IEEE} Signal Process. Mag.}, vol.~40, no.~4, pp. 80--91, 2023.

\bibitem{stoica1989music}
P.~Stoica and A.~Nehorai, ``{MUSIC}, maximum likelihood, and {C}ramer-{R}ao bound,'' \emph{{IEEE} Trans. Acoust., Speech, Signal Process.}, vol.~37, no.~5, pp. 720--741, 1989.

\bibitem{liang2020review}
Y.~Liang, W.~Liu, Q.~Shen, W.~Cui, and S.~Wu, ``A review of closed-form {C}ram{\'e}r-{R}ao bounds for {DOA} estimation in the presence of {G}aussian noise under a unified framework,'' \emph{{IEEE} Access}, vol.~8, pp. 175\,101--175\,124, 2020.

\bibitem{rao2002performance}
B.~D. Rao and K.~S. Hari, ``Performance analysis of root-{MUSIC},'' \emph{{IEEE} Trans. Acoust., Speech, Signal Process.}, vol.~37, no.~12, pp. 1939--1949, 1989.

\bibitem{yuen2002asymptotic}
N.~Yuen and B.~Friedlander, ``Asymptotic performance analysis of {ESPRIT}, higher order {ESPRIT}, and virtual {ESPRIT} algorithms,'' \emph{{IEEE} Trans. Signal Process.}, vol.~44, no.~10, pp. 2537--2550, 2002.

\bibitem{al2022review}
H.~Al~Kassir \emph{et~al.}, ``A review of the state of the art and future challenges of deep learning-based beamforming,'' \emph{{IEEE} Access}, vol.~10, pp. 80\,869--80\,882, 2022.

\bibitem{DNN_WITH_Antenna_ARRAY}
M.~Chen, Y.~Gong, and X.~Mao, ``Deep neural network for estimation of direction of arrival with antenna array,'' \emph{{IEEE} Access}, vol.~8, pp. 140\,688--140\,698, 2020.

\bibitem{cong2020robust}
J.~Cong, X.~Wang, M.~Huang, and L.~Wan, ``Robust {DOA} estimation method for {MIMO} radar via deep neural networks,'' \emph{{IEEE} Sensors J.}, vol.~21, no.~6, pp. 7498--7507, 2021.

\bibitem{feintuch2023neural}
S.~Feintuch \emph{et~al.}, ``Neural-network-based {DOA} estimation in the presence of non-gaussian interference,'' \emph{{IEEE} Trans. Aerosp. Electron. Syst.}, vol.~60, no.~1, pp. 119--132, 2023.

\bibitem{DOAEstimation_LowSNR}
G.~K. Papageorgiou, M.~Sellathurai, and Y.~C. Eldar, ``Deep networks for direction-of-arrival estimation in low {SNR},'' \emph{{IEEE} Trans. Signal Process.}, vol.~69, pp. 3714--3729, 2021.

\bibitem{lee2022deep}
H.~Lee \emph{et~al.}, ``Deep learning-based near-field source localization without a priori knowledge of the number of sources,'' \emph{{IEEE} Access}, vol.~10, pp. 55\,360--55\,368, 2022.

\bibitem{zheng2024deepdoa}
S.~Zheng \emph{et~al.}, ``Deep learning-based {DOA} estimation,'' \emph{{IEEE} Trans. on Cogn. Commun. Netw.}, vol.~10, no.~3, pp. 819--835, Jun. 2024.

\bibitem{lan2023novel}
X.~Lan \emph{et~al.}, ``A novel {DOA} estimation of closely spaced sources using attention mechanism with conformal arrays,'' \emph{{IEEE} Access}, vol.~11, pp. 44\,010--44\,018, 2023.

\bibitem{ji2024transmusic}
J.~Ji, W.~Mao, F.~Xi, and S.~Chen, ``Trans{MUSIC}: A transformer-aided subspace method for {DOA} estimation with low-resolution {ADC}s,'' in \emph{Proc, IEEE ICASSP}, 2024.

\bibitem{shlezinger2023model}
N.~Shlezinger and Y.~C. Eldar, ``Model-based deep learning,'' \emph{Foundations and Trends{\textregistered} in Signal Processing}, vol.~17, no.~4, pp. 291--416, 2023.

\bibitem{lee2022ftmr}
D.~T. Hoang and K.~Lee, ``Deep learning-aided coherent direction-of-arrival estimation with the {FTMR} algorithm,'' \emph{{IEEE} Trans. Signal Process.}, vol.~70, pp. 1118--1130, 2022.

\bibitem{elbir2020deepmusic}
A.~M. Elbir, ``{DeepMUSIC}: Multiple signal classification via deep learning,'' \emph{IEEE Sens. Lett.}, vol.~4, pp. 1--4, 2020.

\bibitem{barthelme2021doa}
A.~Barthelme and W.~Utschick, ``{DoA} estimation using neural network-based covariance matrix reconstruction,'' \emph{{IEEE} Signal Process. Lett.}, vol.~28, pp. 783--787, 2021.

\bibitem{wu2022gridless}
X.~Wu \emph{et~al.}, ``A gridless {DOA} estimation method based on convolutional neural network with {T}oeplitz prior,'' \emph{{IEEE} Signal Process. Lett.}, vol.~29, pp. 1247--1251, 2022.

\bibitem{jiang2023toeplitz}
Z.~Jiang \emph{et~al.}, ``A {T}oeplitz prior-based deep learning framework for {DOA} estimation with unknown mutual coupling,'' in \emph{Proc. EUSIPCO}, 2023.

\bibitem{shiran2026deep}
T.~Shiran, Y.~Gilady, O.~Poran, and N.~Shlezinger, ``Deep unfolded subspace-based {DoA} recovery from sparse arrays,'' in \emph{Proc. IEEE ICASSP}, 2026.

\bibitem{shmuel2023subspacenet}
D.~H. Shmuel \emph{et~al.}, ``Subspace{N}et: Deep learning-aided subspace methods for {DoA} estimation,'' \emph{{IEEE} Trans. Veh. Technol.}, vol.~74, no.~3, pp. 4962--4976, 2025.

\bibitem{DA-MUSIC-2023}
J.~P. Merkofer \emph{et~al.}, ``Data-driven {DoA} estimation via deep augmented {MUSIC} algorithm,'' \emph{{IEEE} Trans. Veh. Technol.}, vol.~73, no.~2, pp. 2771--2785, 2024.

\bibitem{xu2024md}
X.~Xu and Q.~Huang, ``{MD-DOA}: A model-based deep learning {DOA} estimation architecture,'' \emph{{IEEE} Sensors J.}, vol.~24, no.~12, pp. 20\,240--20\,253, 2024.

\bibitem{gast2025near}
A.~Gast, L.~L. Magoarou, and N.~Shlezinger, ``Near field localization via {AI}-aided subspace methods,'' \emph{arXiv preprint arXiv:2504.00599}, 2025.

\bibitem{zohar2025remote}
R.~Zohar, S.~Ginzach, and N.~Shlezinger, ``Remote {DoA} estimation via subspace-oriented deep-learning-aided vector quantization,'' in \emph{Proc. IEEE SPAWC}, 2025.

\bibitem{gawlikowski2023survey}
J.~Gawlikowski \emph{et~al.}, ``A survey of uncertainty in deep neural networks,'' \emph{Artificial Intelligence Review}, vol.~56, pp. 1513--1589, 2023.

\bibitem{jospin2022hands}
L.~V. Jospin, H.~Laga, F.~Boussaid, W.~Buntine, and M.~Bennamoun, ``Hands-on {B}ayesian neural networks—a tutorial for deep learning users,'' \emph{{IEEE} Comput. Intell. Mag.}, vol.~17, no.~2, pp. 29--48, 2022.

\bibitem{fu2026deep}
J.~Fu \emph{et~al.}, ``A deep learning-based {DOA} measurement method for underwater wideband sources using linear array with uncertainty quantification,'' \emph{{IEEE} Trans. Instrum. Meas.}, vol.~75, 2026.

\bibitem{rahaman2021uncertainty}
R.~Rahaman and A.~Thiery, ``Uncertainty quantification and deep ensembles,'' \emph{Advances in Neural Information Processing Systems}, vol.~34, pp. 20\,063--20\,075, 2021.

\bibitem{lindemann2024formal}
L.~Lindemann \emph{et~al.}, ``Formal verification and control with conformal prediction,'' \emph{{IEEE} Control Syst. Mag.}, vol.~45, no.~6, pp. 72--122, 2025.

\bibitem{khurjekar2023uncertainty}
I.~D. Khurjekar and P.~Gerstoft, ``Uncertainty quantification for direction-of-arrival estimation with conformal prediction,'' \emph{The Journal of the Acoustical Society of America}, vol. 154, no.~2, pp. 979--990, 2023.

\bibitem{rozenfeld2026uncertainty}
V.~Rozenfeld and B.~L. Goldshtein, ``Uncertainty quantification and risk control for multi-speaker sound source localization,'' \emph{arXiv preprint arXiv:2603.17377}, 2026.

\bibitem{shlezinger2020model}
N.~Shlezinger, J.~Whang, Y.~C. Eldar, and A.~G. Dimakis, ``Model-based deep learning,'' \emph{Proc. {IEEE}}, vol. 111, no.~5, pp. 465--499, 2023.

\bibitem{shlezinger2022discriminative}
N.~Shlezinger and T.~Routtenberg, ``Discriminative and generative learning for linear estimation of random signals,'' \emph{{IEEE} Signal Process. Mag.}, vol.~40, no.~6, pp. 75--82, 2023.

\bibitem{weisman2026conformal}
O.~Weisman, N.~Shlezinger, and B.~Laufer-Goldshtein, ``Conformal prediction aided {K}alman filters with confidence intervals,'' in \emph{Proc. IEEE ICASSP}, 2026.

\bibitem{dahan2025bayesian}
Y.~Dahan \emph{et~al.}, ``Bayesian {K}alman{N}et: Quantifying uncertainty in deep learning augmented {K}alman filter,'' \emph{{IEEE} Trans. Signal Process.}, vol.~73, pp. 2558--2573, 2025.

\bibitem{stewart1990matrix}
G.~W. Stewart and J.-g. Sun, \emph{Matrix perturbation theory}.\hskip 1em plus 0.5em minus 0.4em\relax Academic Press, 1990.

\bibitem{carrier2005functions}
G.~F. Carrier, M.~Krook, and C.~E. Pearson, \emph{Functions of a complex variable: theory and technique}.\hskip 1em plus 0.5em minus 0.4em\relax SIAM, 2005.

\end{thebibliography}
